\newcommand{\be}{\begin{equation}}
\newcommand{\ee}{\end{equation}}
\newcommand{\ba}{\begin{eqnarray}}
\newcommand{\ea}{\end{eqnarray}}
\newtheorem{theorem}{Theorem}
\newtheorem{corollary}{Corollary}
\newtheorem{definition}{Definition}
\newtheorem{proposition}{Proposition}
\newtheorem{observation}{Observation}
\newtheorem{remark}{Remark}
\newtheorem{lemma}{Lemma}
\DeclarePairedDelimiter\ceil{\lceil}{\rceil}
\begin{document}

\title{Quantum theory is exclusive: a distributed computing setup}
\author{Sutapa Saha}
\email{sutapa.gate@gmail.com}    
\affiliation{Department of Astrophysics and High Energy Physics, S. N. Bose National Centre for Basic Sciences, JD Block, Sector-III, Salt Lake City, Kolkata - 700 106 India.}

\author{Tamal Guha}
\email{g.tamal91@gmail.com}    
\affiliation{Department of Computer Science, The University of Hong Kong, Pokfulam Road, Hong Kong.}
\author{Some Sankar Bhattacharya}
\email{somesankar@gmail.com} 
\affiliation{International Centre for Theory of Quantum Technologies, University of Gdansk, Wita Stwosza 63, 80-308 Gdansk, Poland.}

\author{Manik Banik}
\email{manik11ju@gmail.com}    
\affiliation{Department of Physics of Complex Systems, S. N. Bose National Center for Basic Sciences, Block JD, Sector III, Salt Lake, Kolkata 700106, India.}

\begin{abstract}

The framework of distributed computing, consisting of several spatially separated input-output servers, has immense importance in distant data manipulation. One of the most challenging parts of this setting is to optimize the use of information transmission lines among distant servers. In this work, we have modeled such a physically motivated distributed computing setup for which quantum communication outperforms its classical counterpart, in terms of a limited usage of noiseless transmission lines. Moreover, a broader class of communication entities, that allow state-effect description more exotic than quantum and are described within the framework of generalized probabilistic theory, also fail to meet the strength of quantum theory. The computational strength of quantum communication has further been justified in terms of a stronger version of this task, namely the delayed-choice distributed computation. The proposed task thus provides a new approach to operationally single out quantum theory in the theory-space and hence promises a novel perspective towards the axiomatic derivation of Hilbert space quantum mechanics.
\end{abstract}

\maketitle

\section{INTRODUCTION}

Computation is one of the most profound achievements of human scientific endeavour that shapes the modern era. A comprehensive understanding of its naive foundations demands interdisciplinary study on mathematics and logic \cite{Copeland04}, computer science \cite{Knuth}, cognitive sciences \cite{McCorduck04}, and physics \cite{Landauer61,Bennett73,Bennett82,Lloyd00,Aranoson04}. Mathematically, a computation can be represented as a function from some input string to output string. 
In a physical model of computation, the input  strings are encoded in a physical system on which, depending upon the computable function, some physical processes are performed to obtain the desired output value. By its name, the distributed computation suggests a highly complex, however practically relevant, computing scenario, where the input strings are distributed among multiple non-communicating spatially separated servers and the outputs are, in general, computed at the output servers -- some distant computers. Every individual input servers hence are allowed to transmit their received data to the distant computers via information transmission lines. This scenario mimics the framework of distant data manipulation, which challenges modern technology to reduce the resource requirement for transferring the data from individual input servers. For instance, when a massive celestial object is observed by multiple telescopes at different geographic locations on earth and their observed data is finally computed at a distant computing lab, one needs to access a huge number of perfect transmission lines between those input servers and the computing lab.
Although the inputs and outputs in this distributed setup are considered to be the strings of classical bits, but while the  question of transferring those information among different servers arises, they can be encoded in the states of different systems described by different operational theories, viz., classical, quantum or even more exotic systems than quantum \cite{Popescu94, Barrett07, Chiribella11, DallArno17}. The transmission lines between the input and output servers should be then chosen accordingly to be compatible with the concerning theories and finally the computation is accomplished by performing a suitably chosen measurement on the encoded systems.

Depending upon the configuration of these servers in spacetime, the encoding and decoding steps can be of two types - global and local. Global implementation of encoding and decoding requires all the servers to be at same spacetime point so that any joint physical process can be performed for the required computation, whereas in local case the servers are spatially separated and accordingly their actions are limited. This results in four broad classes of computational scenarios -- (i) local-local, (ii) global-local, (iii) global-global and (iv) local-global; the first type characterizes the encoding procedure and the second stands for decoding procedure. For instance, local discrimination of multipartite product states can be considered as local-local computational scenario where classical information encoded in product states needs to be decoded locally. The phenomenon of 'nonlocality without entanglement' studied in quantum theory \cite{Bennett99,Halder19,Rout19,Rout21} as well as in generalized probability theory (GPT) \cite{Bhattacharya20} confirms instances where perfect success is not possible if the spatially separated parties are constrained to communicate classically only (restriction on the type of communication). Similarly, the local distinguishability of orthogonal entangled states \cite{Ghosh01, Walgate02, Banik21} and recently proposed local marking of such states \cite{Sen22} constitute a scenario for global-local computation, while the recently proposed 'hyper-signaling game' \cite{DallArno17} and 'pairwise distinguishability' game \cite{Naik21} can be considered as prototypes of global-global computational scenario. 
\begin{figure*}[t!]
\centering
\includegraphics[scale=0.55]{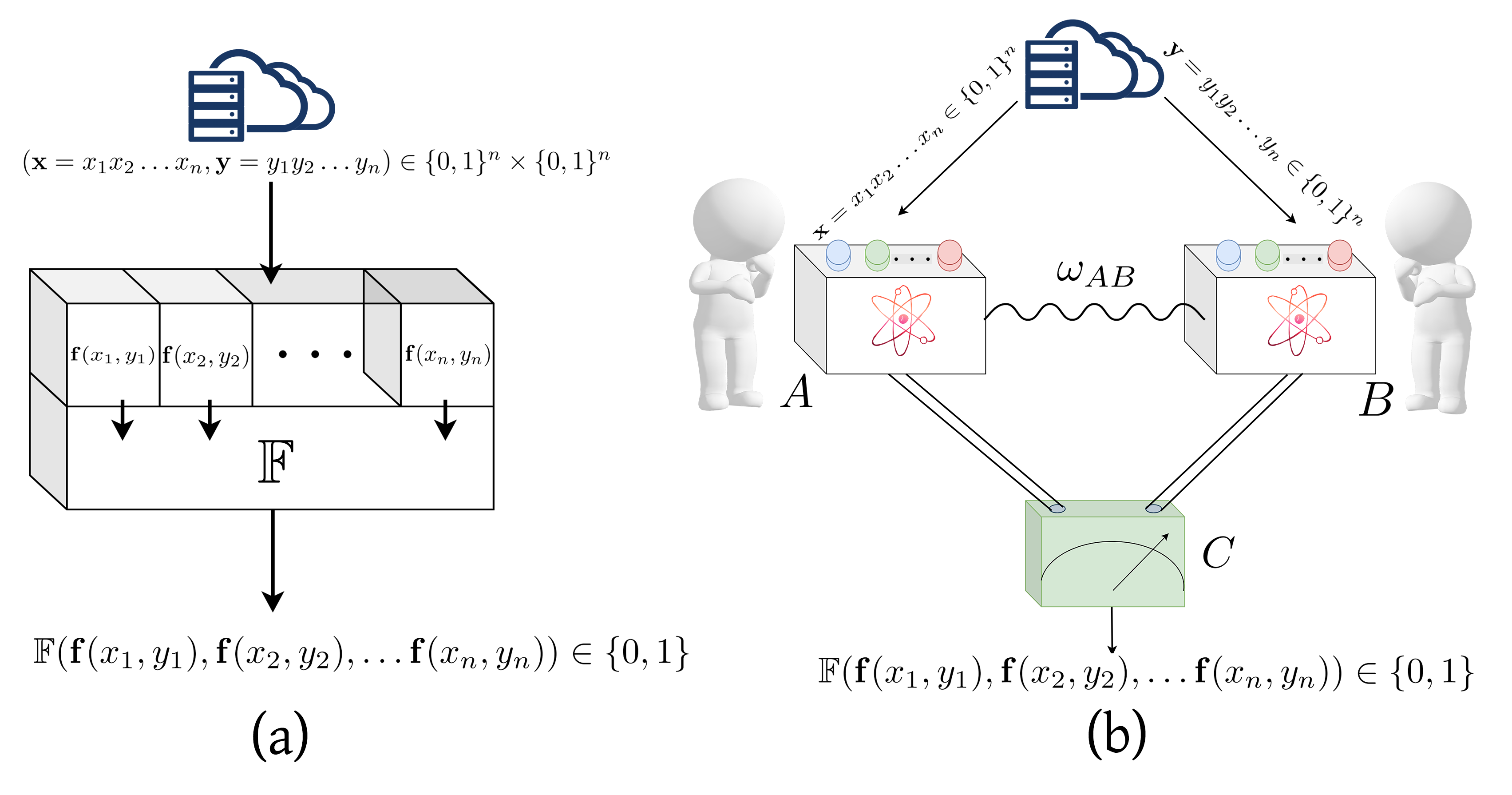}
\caption{(Color on-line) (a) {\it Dual layer computing device} receives two independent and uniformly random $n$-bit strings $x$ and $y$ from a server and outputs a bit. First, it computes the function $\mathbf{f}$ on the bits taken pairwise from $\mathbf{x}$ and $\mathbf{y}$ and finally computes $\mathbb{F}$ on the outputs of the first layer, {\it i.e.}, the dual layer computation can be represented as $(\mathbb{F},\mathbf{f}):\{0,1\}^n\times\{0,1\}^n\mapsto\{0,1\}$. (b) Corresponding {\it distributed computing scenario}: Non-communicating labs $A$ and $B$ receive two $n$-bit strings $\mathbf{x}$ and $\mathbf{y}$ respectively from a server. $A$ and $B$ are allowed to encode their strings' information in the state of systems $S_A$ and $S_B$ that can initially be prepared in some correlated state $\omega_{AB}$. The computer $C$ needs to perform a measurement on systems received from $A$ and $B$ to simulate the output of the dual layer computing device. In delayed choice version of this task, the function $\mathbb{F}$ is declared at a later stage after the communications from $A$ to $C$ and from $B$ to $C$ have taken place.}\label{fig1}
\end{figure*}
In the present work we propose a computational scenario that stands as an appropriate example of the fourth type, {\it i.e.} the local-global computational scenario. The marginal constituents of the encoded system possessed by each receiving server will be constrained by their type and by their information carrying capacity which motivates us to call this computing scenario {\it distributed computing with limited communication} (DCLC). Quite interestingly, we find that there exist DCLC tasks, exactly computable in quantum theory, does not allow perfect computation in classical theory as well as several other GPTs allowing more exotic state or/and effect space structures than quantum theory. We also provide a characterization of such tasks that can be done perfectly in quantum theory. We then study a variant of DCLC task where part of the computing function will be known after the communication from the servers to the computer is completed. We call this variant {\it delayed choice}-DCLC, {\it i.e.} DCDCLC and in short denote it as  DC$^2$LC. Interestingly, perfect accomplishment of certain  DC$^2$LC depends on the structure of the operational theory considered to model the communicating systems. The present work therefore initiates a novel approach in identifying quantum theory as an island in the theory-space. At this point it should be noted that imposing restriction on the type or the capacity of the individual transmission lines is crucial to obtain distinction among different theories. Without any such restriction any distributed computation can always be performed perfectly since input and output strings are of classical bits.\\\\

\section{PRELIMINARIES}
In this section we will first introduce the operational framework of \textit{Generalized Probabilistic Theories} (GPTs) under which we will define the paradigm of \textit{Distributed Computation with Limited Communication} (DCLC) task.
\subsection{Generalized Probabilistic Theories}\label{2.1}
Origin of this framework dates back to 1960's \cite{Mackey63,Ludwig67,Mielnik68}, and it has gained renewed interest in the recent past \cite{Barrett07,Chiribella11,Hardy01}. A GPT is specified by a list of system types and the composition rules specifying combination of several systems. 
The operational framework of GPT consists of three primary structural ingredients: \textit{Preparation, Transformation} and \textit{Measurement}, along with a composition rule for their multipartite versions. The outcome of a preparation device is termed as a \textit{state} $\omega$ of the concerned system, which specifies the outcome probabilities for all the measurements allowed to be performed on it. For a given system, the set of all possible states forms a compact and convex set $\Omega$ embedded in a positive convex cone $V_+$ of some real vector space $V$. Convexity of $\Omega$ assures that any statistical mixture of states is a valid state. The extremal points of $\Omega$, that do not allow any convex decomposition in terms of other states, are called pure states or states of maximal knowledge. 

We also identify a linear functional ($e$) over $V$ as a valid effect if it assigns probability value on every possible $\omega\in\Omega$, i.e., $0\leq e(\omega)\leq1,~\forall\omega\in\Omega$.
The set of effects $\mathcal{E}$ is embedded in the positive dual cone $V^\star_+$.  
Now, a $d$-outcome measurement is specified by a collection of $d$ effects, $M\equiv\{e_j~|~\sum_{j=1}^de_j=u\}$, where $u$ is a unique effect in $V^\star_+$, such that $u(\omega)=1,~\forall\omega\in\Omega$. With the complete characterization of allowed measurements in a GPT framework, one can introduce the idea of distinguishable states from an operational perspective which consequently leads to the concept of \textit{Operational dimension}.
\begin{definition}\label{d1}
	Operational dimension of a system $(S)$ is the largest cardinality of the subset of states, $\{\omega_i\}_{i=1}^n\subset\Omega$, that can be perfectly distinguished by a single measurement, {\it i.e.}, there exists a measurement, $M\equiv\{e_j~|~\sum_{j=1}^ne_j=u\}$, such that, $e_j(\omega_i)=\delta_{ij}$. 
\end{definition}
The operational dimension is generally different from the dimension of the vector space $V$ in which the state space $\Omega$ is embedded. For instance, in case of qubit state space, the set of density operators $\mathcal{D}(\mathbb{C}^2)$ acting on $\mathbb{C}^2$ is embedded in $\mathbb{R}^3$. However, the operational dimension of this system is $2$, as at most two qubit state can be perfectly distinguished by a single measurement -- {\it e.g.} $\{\ket{0},\ket{1}\}$ states can be perfectly distinguished by Pauli measurement along $z$-direction.

Another important ingredient to complete the operational framework for GPT is the transformations, which characterizes the set of possible time evolution on the set of states \cite{Barrett07}. In the present paper, however, we are restricted on the set of reversible transformations only, which are regarded as the symmetries on the state space: $\mathcal{T}(\Omega)\mapsto\Omega$. Note that, these transformations are linear to preserve statistical ignorance and normalization preserving to be a deterministic one. 

With the complete operational description of an individual GPT system, the natural question arises regarding their multipartite compositions. Composite systems of a GPT must be constructed in accordance with no signaling (NS) principle that prohibits instantaneous communication between two distant locations. This, along with the assumption of {\it tomographic locality} \cite{Hardy13}, assures that the composite state space lies in between two extremes - the maximal and the minimal tensor product \cite{Namioka1969}. 
\begin{definition}
	The maximal tensor product, $\Omega^A\otimes_{\max}\Omega^B$, is the set of all bi-linear functionals, $\phi: (V^A)^\star\otimes (V^B)^\star\mapsto\mathbb{R}$, such that, (i) $\phi(e^A,e^B)\ge 0$, for all $e^A\in\mathcal{E}^A$ and $e^B\in\mathcal{E}^B$, and (ii) $\phi(u^A, u^B) = 1$, where $u^A$ and $u^B$ are unit effects for system $A$ and $B$ respectively. 
\end{definition}
\begin{definition}
	The minimal tensor product, $\Omega^A\otimes_{\min}\Omega^B$, is the convex hull  of the product states $\omega^{A\otimes B}(=\omega^A\otimes \omega^B)$.
\end{definition}
States belonging in $\Omega^A\otimes_{\min}\Omega^B$ are called separable; otherwise, they are entangled. Under the assumption of no restriction \cite{Chiribella11, Janotta2013} hypothesis one can choose the corresponding effect spaces as

$\mathcal{E}^A\otimes_{\max(\min)}\mathcal{E}^B:=\{e^{AB}\in(V^A)^*\otimes(V^B)^*|0\leq e^{AB}(\omega^{AB})\leq1,~\forall\omega^{AB}\in\Omega^A\otimes_{\min(\max)}\Omega^B\}$. 

It is important to note that, the standard composite structure quantum systems is neither the minimal nor the maximal; it lies strictly in between. As a result, it contains both the entangled states as well as entangled effects in its state and effect space respectively.

In Appendix \ref{6.1} we have  discussed a particular class of such GPTs, namely the \textit{Polygon models}, relevant in the context of paper. The importance of choosing such GPTs in comparison to quantum theory is two-fold. A continuous extension of such state space structures leads to the circular cross-section of qubit Bloch ball, as a consequence they exhibits several non-classical behaviors similar to quantum theory \cite{}. On the other hand, the bipartite correlations obtained from these theories completes the 2-2-2 no-signalling polytope \cite{Barrett07}.


\subsection{Distributed computing scenario}
In the above mentioned operational framework for GPTs, here we will introduce the task of distributed computation.\par 
The scenario consists of dual layer functions $\mathcal{F}\equiv(\mathbb{F},\mathbf{f}):\{0,1\}^{2n}\to\{0,1\}$ defined as follows. The input strings $\mathbf{x}\equiv x_{1}\cdots x_{n}\in\{0,1\}^n$ and $\mathbf{y}\equiv y_{1}\cdots y_{n}\in\{0,1\}^n$, sampled independently and randomly from a cloud server, are distributed between two non communicating servers $A$ and $B$ personified as Alice and Bob, respectively. Alice and Bob encodes their respective inputs in the state of their respective systems (denoted as $S_A$ and $S_B$), and send the systems to a distant computer $C$, personified as Charlie, where the final computation $\mathcal{F}(\mathbf{x},\mathbf{y})\equiv\mathbb{F}(z_{1},\cdots,z_{n})$ with $z_{i}=\mathbf{f}(x_{i},y_{i})$ takes place and Charlie produces a single bit output (see Fig.\ref{fig1}). Here and henceforward, we will use the notation DCLC($n$) to indicate that the inputs $\mathbf{x}$ and $\mathbf{y}$ are $n$-bit strings. A more general structure of the aforementioned dual-layer computation can be proposed where $z_{i}=\mathbf{f}_{i}(x_{i},y_{i})$ and $\mathbf{f}_{i}$'s are different functions for different $i\in\{1,2,\cdots,n\}$. However, in this work we will be restricted to the case where all $\mathbf{f}_i$'s are identical. On the other hand, restriction on the information carrying capacities of $S_{A}$ and $S_{B}$ make this computational scenario nontrivial. For instance, not all DCLC($n$) tasks can be done perfectly if only $(n-1)$-cbits are allowed from each of Alice and Bob to Charlie. For an arbitrary theory, the limitation on communication can be put through some operational means. One such way is to restrict their \textit{operational dimension} (see Definition \ref{d1}) to be 2, i.e., equivalent to a cbit. Restriction on communication might also be imposed from other information theoretic motivations \cite{DallArno17,Frenkel15,Tavakoli19}.\par
Our DCLC task can be seen as a close cousin of the well known \textit{simultaneous message passing model} \cite{Yao79}, where two distant parties Alice and Bob are supposed to communicate their local bit strings ($x$ and $y$) to a distant referee, who has to compute a binary function $f(x,y)$. However, the present task of DCLC($n$) considers a dual-layered quantum computation $F(f(x),f(y))$. While an one-to-one correspondence between such dual layered computations and communication complexity \cite{Hromkovic97, Buhrman10} in the asymptotic cases has already been reported in \cite{Buhrman98}, the present scenario considers their perfect accomplishments in the single-shot regime. Moreover, here the communicating systems are not only restricted in classical or quantum theory: It allows communication of the elementary systems of a broader class of GPTs, with exotic state and effect space descriptions.

Note that, the present scenario of DCLC assumes that all the parties (Alice, Bob, and Charlie) know both the functions $\mathbb{F}$ \& $\mathbf{f}$ and choose their encoding and decoding strategies accordingly. An interesting variant of the task can be introduced where part of the computing function remains oblivious to Alice and Bob prior to their communication(s) to Charlie. More particularly, the function $\mathbf{f}$ is known to all apriori, but Alice and Bob learn about the function $\mathbb{F}$ only after they communicate to Charlie -- denoted henceforth as DC$^2$LC task. Interestingly, we establish that perfect accomplishment of some DC$^2$LC task demands specific structures in the state and effect spaces of the operational theories. Depending upon whether the systems $S_A$ and $S_B$ are taken to be classical or quantum or the elements of post-quantum GPT, the strategies executing a DCLC/ DC$^2$LC are respectively called classical, quantum, and post-quantum strategies.

\subsection {Trivial computation}
In the aforesaid distributed scenario, there exists some functions which can be trivially computed and the rest of the functions can not have such trivial strategy. Such trivial computation is formally defined as follows. 
\begin{definition}\label{def1}
A dual layer computation $(\mathbb{F},\mathbf{f})\in$DCLC($n$) is said to be trivial whenever there exists a perfect classical strategy that involves no more than $(n-1)$-cbit from each of the servers to the computer; otherwise the computation is said to be nontrivial.
\end{definition}
Note that the OD for $(n-1)$-cbit is $2^{(n-1)}$, i.e., the total numbers of bit strings, which can be perfectly discriminated. In a similar spirit, for an arbitrary GPT, we restrict each of Alice and Bob to communicate a system with OD $\leq2^{(n-1)}$ to the distant computer, in a DCLC($n$) settings.

Evidently, there are total $2^{2^n}\times2^{2^2}$ number of different DCLC($n$) tasks for a given $n$, among which some of them are trivial and others nontrivial. For instance, a computation $(\mathbb{F},\mathbf{f})$ is trivial whenever one of the functions is a constant function. Importantly, there exist trivial computations where neither $\mathbb{F}$ nor $\mathbf{f}$ is a constant. One such example is $(\mathbb{F}\equiv\oplus,\mathbf{f}\equiv\oplus)$, where $`\oplus'$ denotes the logical exclusive disjunction (XOR) operation. Triviality follows from the fact that $\oplus_{i=1}^nz_{i}=\oplus_{i=1}^n(x_{i}\oplus y_{i})=(\oplus_{i=1}^nx_{i})\oplus(\oplus_{i=1}^ny_{i})$, {\it i.e.} Charlie can do the required computation if Alice and Bob inform parity of their respective strings which requires only $1$-cbit communication from each of the transmitters to Charlie. 

\section{RESULTS}
We start this section by characterizing the set of the trivial distributed dual-layered computations, which can be computed by communicating classical systems to Charlie and hence can be perfectly accomplished by any arbitrary broader class of GPTs. Consequently, we will consider the status of nontrivial DCLC tasks in the framework of GPTs (including Quantum Theory) and conclude about the uniqueness of bipartite quantum systems as an island in the theory-space. 
\subsection{Trivial computations and their computability}
In the following, we will identify the dual-layered functions ($\mathbb{F}$,$f$), only which can be computed trivially DCLC($2$) settings. Moreover, these pairs of functions form a particular class of trivial computations even for DCLC($n$) scenario.
\begin{proposition}\label{prop1}
A dual layer computation $(\mathbb{F},\mathbf{f})\in\mbox{DCLC}(2)$ is trivial if and only if any one of the following criteria is satisfied: 
\begin{itemize}
\item[(i)] at least one of the two functions is a constant function;
\item[(ii)] at least one of them is a single-bit function; 
\item[(iii)] $\mathbb{F}$ is symmetric on inputs and $\mathbf{f}$ can be realized through $\mathbb{F}$ [and with single-bit NOT operation], {\it i.e.} $\mathbf{f}(a_1,a_2)=\mathbb{F}(a_1,a_2)$ [$\mathbf{f}(a_1,a_2)=\mathbb{F}(\bar{a}_1,\bar{a}_2)$].
\end{itemize}
\end{proposition}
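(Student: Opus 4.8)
The plan is to recast perfect classical solvability as a combinatorial property of a single Boolean matrix. Write $G(\mathbf{x},\mathbf{y}):=\mathbb{F}(\mathbf{f}(x_1,y_1),\mathbf{f}(x_2,y_2))$ for the overall $\{0,1\}^2\times\{0,1\}^2\to\{0,1\}$ computation. Since the target is \emph{exact} (zero-error on every input) and $G$ is deterministic, the shared correlated state reduces to shared randomness which can be frozen to a single favourable value: any randomized $1$-cbit-each protocol that succeeds with certainty does so for each fixed value of the randomness, so a classical strategy may be taken to be a triple of maps $\alpha,\beta\colon\{0,1\}^2\to\{0,1\}$ and a decoder $\gamma\colon\{0,1\}^2\to\{0,1\}$ with $\gamma(\alpha(\mathbf{x}),\beta(\mathbf{y}))=G(\mathbf{x},\mathbf{y})$. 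First I would prove the clean equivalence: such a strategy exists (i.e.\ the task is trivial) if and only if the $4\times4$ matrix $M$ with entries $M_{\mathbf{x}\mathbf{y}}=G(\mathbf{x},\mathbf{y})$ has at most two distinct rows and at most two distinct columns. Necessity holds because two rows with the same $\alpha$-value must be identical (and likewise columns with the same $\beta$-value); sufficiency follows by letting $\alpha,\beta$ label the row/column equivalence classes and reading $\gamma$ off any representative. This converts triviality into a purely structural question about $M$.

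With this criterion the ``if'' direction becomes a direct verification in each case, exploiting the product-like structure $M_{\mathbf{x}\mathbf{y}}=\mathbb{F}(\mathbf{f}(x_1,y_1),\mathbf{f}(x_2,y_2))$. For (i) a constant $\mathbf{f}$ or $\mathbb{F}$ makes $M$ constant, hence one distinct row and column. For (ii), if $\mathbf{f}$ ignores Bob's bit then $M$ is independent of $\mathbf{y}$ and every row is constant, so only the two row-types ``all-$0$'' and ``all-$1$'' can occur (symmetrically if it ignores Alice's bit), while if $\mathbb{F}$ depends on a single $z_i$ then $M$ depends only on the pair $(x_i,y_i)$, again forcing at most two distinct rows and columns. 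For (iii) I would substitute $\mathbf{f}$ in the stated form into $M$ and check by computation that it collapses to two row-types and two column-types; for instance $\mathbb{F}=\mathbf{f}=\oplus$ gives $M=(x_1\oplus x_2)\oplus(y_1\oplus y_2)$, which manifestly depends only on the two parities.

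The substantive content is the ``only if'' direction: assuming $M$ has at most two distinct rows and columns, deduce (i), (ii) or (iii). First I would discharge the constant and single-bit cases, so that both $\mathbb{F}$ and $\mathbf{f}$ may be assumed to depend genuinely on both arguments. I would then present each of the four rows as a function $R_{x_1x_2}(\mathbf{y})=\mathbb{F}(f_{x_1}(y_1),f_{x_2}(y_2))$, where $f_x:=\mathbf{f}(x,\cdot)$ and $f_0\neq f_1$ precisely because $\mathbf{f}$ is genuinely bivariate. The demand that these four functions realize at most two distinct values, together with its column counterpart, is what should force $\mathbb{F}$ to be symmetric (so that the ``mixed'' configurations $(f_0,f_1)$ and $(f_1,f_0)$ can be reconciled) and then pin down which $\mathbf{f}$ survive.

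The hard part will be exactly this last step, and I expect the main obstacle to be the careful bookkeeping of the NOT operations. One has to track separately how negating Alice's input bit, Bob's input bit, or the output interacts with the symmetric gate $\mathbb{F}$, and to notice that this interaction behaves differently for the threshold gates (AND/OR and their complements) than for the linear gates (XOR/XNOR): for the linear gates input negations collapse onto one another, whereas for the threshold gates they produce genuinely different admissible $\mathbf{f}$'s, and output negation must be handled with care lest one wrongly admit a complemented gate. Getting the precise list of admissible $\mathbf{f}$ for each symmetric $\mathbb{F}$ right — neither over- nor under-counting the negation variants — is where the argument is most error-prone, so I would cross-check the final characterization against a brute-force enumeration over the finitely many symmetric $\mathbb{F}$ before declaring the ``only if'' direction complete.
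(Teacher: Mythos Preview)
Your approach is correct and takes a genuinely different route from the paper. The paper handles the ``if'' direction by exhibiting an explicit $1$-cbit-each protocol in each case --- no communication for (i), one-sided or single-index communication for (ii), and for (iii) the algebraic rearrangement $(\bar x_1\circ\bar y_1)\circ(\bar x_2\circ\bar y_2)=(\bar x_1\circ\bar x_2)\circ(\bar y_1\circ\bar y_2)$, which tacitly requires the operation $\circ$ underlying $\mathbb{F}$ to be commutative \emph{and} associative (this is what ``symmetric'' has to mean for the step to go through; it fails for, e.g., $\mathbb{F}=\mathrm{NAND}$) --- and then disposes of the ``only if'' direction with the bare assertion that ``all other remaining dual layer computations are nontrivial'', backed by the count $176/256$, i.e.\ by enumeration. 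You instead pass through the intermediate lemma that triviality is equivalent to the $4\times4$ matrix $M_{\mathbf{x}\mathbf{y}}=G(\mathbf{x},\mathbf{y})$ having at most two distinct rows and at most two distinct columns, the standard zero-error simultaneous-message characterization, and reduce both directions to structural statements about $M$. The paper's approach buys concreteness: one sees exactly which bit Alice and Bob transmit. Yours buys a uniform criterion that cleanly separates the protocol-existence question from the case analysis on $(\mathbb{F},\mathbf{f})$ and gives the converse a conceptual handle rather than a raw count. Your candid plan to cross-check the final ``only if'' bookkeeping by enumerating the finitely many symmetric $\mathbb{F}$ is essentially what the paper does anyway, so nothing is lost.
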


\begin{proof}
  {\bf (i)} If $\mathbb{F}$ is a constant function, Charlie can yield the required output which is independent of the bit strings received from the transmitters. On the other hand, when $\mathbf{f}$ is constant, a single input pair $(z_{1},z_{2})$, with $z_1=z_2$, will be fed into the computer $C$, effectively. In both of these cases, the dual layer computation $(\mathbb{F},\mathbf{f})$ can be done perfectly even without any communication from the transmitters to the computer.

   {\bf (ii)} We will call a function $\mathbb{G}:\{0,1\}^n\to\{0,1\}$ to be single-bit function if $\forall\mathbf{a}\in\{0,1\}^n$ the functional value $\mathbb{G}(\mathbf{a})$ only depends on a single bit $a_i$ for some fixed $i\in\{1,\cdots,n\}$. 

If $\mathbb{F}$ is a single bit function, Charlie only requires information about one of $z_{1}$ and $z_{2}$ to execute the dual layer computation. The transmitters will accordingly send the corresponding bit of their strings. If $\mathbf{f}$ is a single bit function, the DCLC $(\mathbb{F},\mathbf{f})$ will effectively depend on one of the input strings - $\mathbf{x}$ or $\mathbf{y}$. Now, Alice or Bob will perform the required computation accordingly and communicate the $1$-bit output to Charlie.

    {\bf (iii)} A function will be called symmetric if it is of the form either $\mathbb{G}(\mathbf{a})=a_1\star a_2\star\cdots\star a_n$ or $\mathbb{G}(\mathbf{a})=\bar{a}_1\star \bar{a}_2\star\cdots\star \bar{a}_n$ for some binary operation $\star$. Otherwise, it is called non-symmetric.

    Let $\mathbf{f}$ (denoted by $\star$) be realized by $\mathbb{F}$ (denoted by $\circ$) and single bit NOT operation. Since, we consider $\mathbb{F}$ as symmetric, therefore $\mathbf{f}(\alpha,\beta)=\alpha\star\beta=\bar{\alpha}\circ\bar{\beta}=\mathbb{F}(\bar{\alpha},\bar{\beta})$, where $\alpha,\beta\in\{0,1\}$. In the dual-layer computation, we have,
\begin{eqnarray}
	\nonumber\mathbb{F}(z_1,z_2)&=&z_{1}\circ z_{2}
	=\mathbf{f}(x_1,y_1)\circ \mathbf{f}(x_2,y_2)\\
	\nonumber&=&(x_{1}\star y_{1})\circ(x_{2}\star y_{2})
	=(\bar{x}_{1}\circ\bar{y}_{1})\circ(\bar{x}_{2}\circ\bar{y}_{2})\\
	\nonumber&=&(\bar{x}_{1}\circ\bar{x}_{2})\circ(\bar{y}_{1}\circ\bar{y}_{2}).
\end{eqnarray}
Alice and Bob thus compute a single bit data from their respective inputs and send it to Charlie. Same holds true if $\mathbf{f}(\alpha,\beta)=\mathbb{F}(\alpha,\beta)$. 
\end{proof}

As it turns out, out of $256$ computations $176$ are trivial and the rest are nontrivial. Furthermore, among the trivial computations $60$ can be accomplished even without any communication from Alice \& Bob to Charlie, $56$ require communication from only one of Alice and Bob to Charlie, and the rest require communications from both. The proof technique used for Proposition \ref{prop1} leads us to the following generalized result.

\begin{corollary}\label{coro1}
	A dual layer computation $(\mathbb{F},\mathbf{f})\in\mbox{DCLC}(n)$, for arbitrary $n~ (\ge 2)$, is trivial if any one of the criteria in Proposition \ref{prop1} is satisfied.
\end{corollary}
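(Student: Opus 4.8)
The plan is to prove the \emph{sufficiency} direction of each of the three criteria for arbitrary $n$ by exhibiting, in every case, an explicit deterministic classical protocol in which Alice and Bob each send Charlie at most a single classical bit. Since the DCLC($n$) restriction caps the operational dimension of $S_A,S_B$ at $2^{n-1}$ --- equivalently, lets each transmitter send up to $n-1$ classical bits --- a one-cbit message is always within budget for every $n\ge 2$, and this is exactly what allows the $n=2$ constructions behind Proposition~\ref{prop1} to lift verbatim. First I would record the routine reduction that a classically correlated $\omega_{AB}$ (shared randomness) cannot help a \emph{perfect} computation: if the output must be correct with certainty on every input pair, one may freeze the shared variable to a single value, so it suffices to treat deterministic encode/decode maps $\alpha(\mathbf{x}),\beta(\mathbf{y})$ and a decoder $\gamma$.

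For criterion \textbf{(i)}, if either $\mathbb{F}$ or $\mathbf{f}$ is constant then $(\mathbb{F},\mathbf{f})(\mathbf{x},\mathbf{y})$ is a fixed bit independent of the inputs (for constant $\mathbf{f}$ because $\mathbf{z}=\mathbf{f}(x_1,y_1)\cdots\mathbf{f}(x_n,y_n)$ is itself fixed), so Charlie outputs that bit with no communication. For criterion \textbf{(ii)} I split into two sub-cases. If $\mathbb{F}$ depends only on the coordinate $z_{i_0}$, the result is a function of $z_{i_0}=\mathbf{f}(x_{i_0},y_{i_0})$ alone: Alice sends $x_{i_0}$, Bob sends $y_{i_0}$, and Charlie evaluates $\mathbf{f}$ and then $\mathbb{F}$. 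If instead $\mathbf{f}$ depends on only one of its two arguments, then the whole string $\mathbf{z}$ --- and hence the output --- is a function of $\mathbf{x}$ alone (or of $\mathbf{y}$ alone), so the party holding the relevant string computes the final bit locally and transmits it. Each costs at most one cbit per party.

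The substantive case is \textbf{(iii)}. Here $\mathbb{F}$ is symmetric, so it is built from a single binary operation $\star$ via $\mathbb{F}(\mathbf{z})=z_1\star\cdots\star z_n$ (or its complemented form), and demanding that this expression be well defined and permutation invariant forces $\star$ to be associative and commutative; over $\{0,1\}$ the nonconstant such operations are exactly $\wedge,\vee,\oplus,\leftrightarrow$ (the constant ones fall under (i)). The engine of the proof is then the factorization identity, valid for any associative and commutative $\star$, that separates the $\mathbf{x}$- and $\mathbf{y}$-dependence:
\begin{equation}
(x_1\star y_1)\star(x_2\star y_2)\star\cdots\star(x_n\star y_n)=(x_1\star\cdots\star x_n)\star(y_1\star\cdots\star y_n).
\end{equation}
When $\mathbf{f}(a_1,a_2)=\mathbb{F}(a_1,a_2)=a_1\star a_2$ this lets Alice send the single bit $x_1\star\cdots\star x_n$ and Bob send $y_1\star\cdots\star y_n$, with Charlie combining them by $\star$; the complemented realization $\mathbf{f}(a_1,a_2)=\mathbb{F}(\bar a_1,\bar a_2)=\bar a_1\star\bar a_2$ is handled identically once each party complements its own input bits before aggregating. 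Both variants again cost one cbit each, well inside the $n-1$ budget.

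I expect the main obstacle to be the bookkeeping in (iii) rather than any new idea: one must confirm that ``symmetric'' together with ``realizable through $\mathbb{F}$'' genuinely pins $\star$ down to an associative--commutative operation so that the factorization applies, and that every admitted pair $(\mathbb{F},\mathbf{f})$ reduces, after tracking the negations, to the separable form above. This step is delicate precisely for the non-affine operations $\wedge,\vee$, where a misplaced outer negation would instead reproduce an inner-product-type function such as $(x_1\wedge y_1)\vee(x_2\wedge y_2)$ whose communication matrix has more than two distinct rows and which therefore is provably \emph{not} one-cbit computable; so the placement of the complementations relative to $\star$ must be respected. Once the $n=2$ analysis is organized around the factorization identity, however, every step is manifestly $n$-independent, and the argument transfers to all $n\ge 2$, giving Corollary~\ref{coro1}.
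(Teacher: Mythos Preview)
Your proposal is correct and follows essentially the same route as the paper. The paper does not give a separate proof of Corollary~\ref{coro1}; it merely remarks that the constructions in the proof of Proposition~\ref{prop1} lift to arbitrary $n$, and your write-up supplies exactly those details --- in particular the $n$-fold factorization identity $(x_1\star y_1)\star\cdots\star(x_n\star y_n)=(x_1\star\cdots\star x_n)\star(y_1\star\cdots\star y_n)$ underlying case~(iii). Your only addition is to make explicit that the paper's notion of ``symmetric'' (writing $\mathbb{F}(\mathbf{z})=z_1\star\cdots\star z_n$ without parentheses and treating it as permutation invariant) tacitly assumes $\star$ is associative and commutative, which is precisely what makes the separation of Alice's and Bob's contributions possible; the paper uses this step without comment already at $n=2$.
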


\subsection{Nontrivial computations and their computability}
Referring to the operational framework introduced in Section \ref{2.1} for an arbitrary GPT and in Appendix \ref{6.1} for the specific class of polygon models, here we will briefly describe the operational model for the framework of DCLC($n$) in such theories.

To accomplish a distributed computation, Alice and Bob start their protocol with a shared bipartite state $\omega^{AB}\in \Omega^{AB}$, where $\Omega^{AB}$ is the state space for composite system with the subsystems $S_A$ \& $S_B$ satisfying the constraints imposed on their operational dimension. Depending upon the inputs $\mathbf{x}$ and $\mathbf{y}$, Alice and Bob apply some local reversible transformations $\mathcal{T}_{\mathbf{x}}^{A}$ and $\mathcal{T}_{\mathbf{y}}^{B}$ on their respective parts of the shared system. For encoding one might consider more general local operations that are not reversible. However such operations turn out be less efficient as the set of non-classical correlations arising from bipartite entangled states generally reduces under such operations. On the other hand, application of such operations is thermodynamically costlier than reversible operations. Thus we stick to reversible encodings only. Once Charlie receives the encoded systems from Alice and Bob, he performs some decoding measurement $\mathcal{M}^{AB}\equiv\{e_{k}^{AB}~|~e_{k}^{AB}\in\mathcal{E}^{AB}~\&~\sum_ke_{k}^{AB}=u^{AB}\}$; here $\mathcal{E}^{AB}$ is the set of all bipartite effects with $u^{AB}$ denoting the unit effect. 
Post processing of the measurement outcomes completes the final computation  $\mathbb{F}(f(x_{1},y_{1}),\cdots, f(x_{n},y_{n}))$.

In the following, we will first identify the basic requirements to accomplish any of the nontrivial distributed computations in any of these GPTs.
\begin{proposition}\label{prop2}
	Any nontrivial computation $(\mathbb{F},\mathbf{f})\in\mbox{DCLC}(n)$ in a GPT necessitates presence of entanglement in bipartite state and/or effect spaces of that theory.
\end{proposition}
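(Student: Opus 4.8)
The plan is to prove the contrapositive: if the GPT under consideration admits neither entangled states nor entangled effects on the bipartite system $S_A\otimes S_B$, then any $(\mathbb{F},\mathbf{f})\in\mbox{DCLC}(n)$ that it computes perfectly (within the operational-dimension bound $2^{n-1}$ on each of $S_A$ and $S_B$) is already trivial in the sense of Definition \ref{def1}. First I would use the absence of entanglement to write the shared resource as a separable state $\omega^{AB}=\sum_\lambda p_\lambda\,\sigma^A_\lambda\otimes\tau^B_\lambda$ and each decoding effect as a separable effect $e^{AB}_k=\sum_j c_{kj}\,f^A_j\otimes g^B_j$ with $c_{kj}\ge 0$. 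Since the encodings $\mathcal{T}^A_{\mathbf{x}}$ and $\mathcal{T}^B_{\mathbf{y}}$ act locally, Charlie's outcome probability factorizes as
\[
p(k\mid\mathbf{x},\mathbf{y})=\sum_{\lambda}p_\lambda\sum_{j} c_{kj}\,a_j(\mathbf{x},\lambda)\,b_j(\mathbf{y},\lambda),
\]
where $a_j(\mathbf{x},\lambda)=f^A_j\big(\mathcal{T}^A_{\mathbf{x}}(\sigma^A_\lambda)\big)\in[0,1]$ and $b_j(\mathbf{y},\lambda)=g^B_j\big(\mathcal{T}^B_{\mathbf{y}}(\tau^B_\lambda)\big)\in[0,1]$. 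This is a local (product) form controlled by a shared classical label $\lambda$.

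Next I would exploit perfection. Writing $k^\star=\mathbb{F}(\mathbf{f}(x_1,y_1),\dots,\mathbf{f}(x_n,y_n))$, exact computation means $p(k^\star\mid\mathbf{x},\mathbf{y})=1$ for every $(\mathbf{x},\mathbf{y})$. As the right-hand side is an average over $\lambda$ of quantities bounded by $1$, every branch with $p_\lambda>0$ must itself saturate the bound; hence a single product term $\sigma^A\otimes\tau^B$, together with the local encodings and the product-effect readout, already computes the function exactly. Thus, without loss of generality, the perfect GPT protocol uses an unentangled state and a product measurement, i.e. it reduces to a pure prepare-and-measure protocol on the two local systems $S_A$ and $S_B$, each of operational dimension at most $2^{n-1}$.

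The remaining and principal step is to convert this local GPT prepare-and-measure protocol into a genuinely classical strategy obeying the same communication constraint, thereby certifying triviality. Here the operational-dimension bound does the essential work: because each encoded system can carry at most $2^{n-1}$ perfectly distinguishable signals, the information Charlie extracts from it through the local effects $\{f^A_j\}$ (respectively $\{g^B_j\}$) should be reproducible by a classical message drawn from an alphabet of size $2^{n-1}$, i.e. an $(n-1)$-cbit message, supplemented by the shared label $\lambda$; replacing $S_A$ and $S_B$ by classical systems of equal operational dimension would then yield a classical protocol computing $(\mathbb{F},\mathbf{f})$ exactly, making the task trivial. I expect this last step to be the main obstacle: one must show that, for a single GPT system, it is the operational dimension and not the (possibly far larger) affine dimension of its state space that governs the classical-simulation cost of its prepare-and-measure statistics. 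I would handle it by a single-system simulation argument of the Frenkel--Weiner type \cite{Frenkel15,Tavakoli19}, adapted to the product (separable-effect) structure isolated above, using the determinism forced in the previous step to make the matched-dimension classical encoding explicit rather than merely existential.
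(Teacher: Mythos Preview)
Your approach is quite different from the paper's, and the step you yourself flag as the main obstacle is a genuine gap that the proposed fix does not close.

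The paper argues structurally rather than by simulation. If the bipartite theory admits neither entangled states nor entangled effects, then the composite cone satisfies $V_+^A\otimes_{\min}V_+^B=V_+^A\otimes_{\max}V_+^B$. The paper then invokes the recent resolution of Barker's conjecture by Aubrun, Lami, Palazuelos and Pl\'avala \cite{Aubrun19}: this equality of minimal and maximal tensor products holds for two finite-dimensional proper cones \emph{if and only if} one of them is simplicial, i.e., classical. Since in the paper's setting $S_A$ and $S_B$ are systems of the same GPT, both local systems are then classical of operational dimension at most $2^{n-1}$, so the protocol is literally an $(n{-}1)$-cbit classical strategy and the task is trivial by Definition~\ref{def1}. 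No per-protocol simulation argument is needed.

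Your route instead tries to convert an arbitrary separable-state/separable-effect GPT protocol into a classical one of matched operational dimension. The reduction to a single product branch via perfection is fine, but the ``Frenkel--Weiner type'' step does not go through: the result of \cite{Frenkel15} is specific to Hilbert-space quantum mechanics, and it is \emph{not} true for general GPTs that the prepare-and-measure statistics of a system of operational dimension $d$ lie in the classical $d$-level polytope. Already the square bit, with operational dimension $2$, realises a perfect $2\to1$ random access code via its four extremal states and two incompatible binary measurements---something one classical bit plus shared randomness cannot reproduce. In your reduced protocol the separable decomposition $e_k^{AB}=\sum_j c_{kj}\,f_j^A\otimes g_j^B$ makes the effective marginal measurement on $S_A$ depend on $\omega_{\mathbf y}^B$, so precisely this adaptive structure is present; determinism of the \emph{joint} outcome does not by itself force the induced single-system behaviour to be classically simulable at operational dimension $2^{n-1}$. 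Without an independent GPT-level replacement for Frenkel--Weiner the contrapositive does not close, whereas the paper sidesteps the whole issue via the cone-theoretic input \cite{Aubrun19}.
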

\begin{proof}
Recall that a DCLC is trivial (nontrivial) if it can (cannot) be perfectly accomplished by some (any) classical strategy. In the language of GPT, the state and effect spaces of a $d$-level classical system is specified by a $(d-1)$-simplex. A bipartite system, composed of two such classical systems, is described uniquely by the minimal tensor product. In other words, the composite system has unique state space, as in this case we have, $\Omega^A\otimes_{\min}\Omega^B=\Omega^A\otimes_{\max}\Omega^B$ \cite{Namioka1969}; hence, the composite system allows no entanglement neither in states nor in effects. Barker's conjecture \cite{Barker76} concerns with the converse question, {\it i.e.}, for what kind of convex sets the tensor product is unique. Recently, Aubrun {\it et al.} provide an affirmative proof to the Barker's conjecture that the minimal and maximal tensor products of two finite-dimensional proper cones coincide {\it if and only if} one of the two cones is generated by a linearly independent set, {\it i.e.}, one of the state spaces is classical \cite{Aubrun19}. The only if part of this result assures the present Proposition.        
\end{proof}

It is important to note that, the above argument is true upto the assumption of \textit{no restriction hypothesis} \cite{Chiribella11, Janotta2013}, which states that with a particular choice of state space, all possible effects, which give positive probability measure on this set, should be physically realizable. 

As a natural consequence of Proposition \ref{prop2}, the question arises whether all the nontrivial DCLC tasks can be perfectly accomplished in a GPT that allows entanglement in its state and/or effect space. In the next, we will see that this is not the case in general. To this aim, we characterize the DCLC($n$) tasks that can be perfectly accomplished in quantum theory. 

Now, recalling the previously mentioned restrictions on the OD of the communicated systems, each of Alice and Bob can communicate some quantum state $\rho\in\mathcal{D}(\mathbb{C}^d)$ to Charlie, where $d\le2^{(n-1)}$. Therefore they can start with sharing some bipartite state $\rho^{AB}\in\mathcal{D}(\mathbb{C}^d\otimes\mathbb{C}^d)$. Our next result fully classify the nontrivial computation in the simplest scenario that can be perfectly done with quantum resources.  

\begin{theorem}\label{theorem1}
A nontrivial dual layer computation $(\mathbb{F},\mathbf{f})\in\mbox{DCLC}(2)$ is perfectly computable in quantum theory {\it if and only if} $\mathbf{f}$ is a balanced function. 
\end{theorem}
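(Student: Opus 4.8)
The plan is to prove the forward implication by an explicit protocol and the converse by a rigidity/orthogonality argument. Throughout write $z_i=\mathbf f(x_i,y_i)$ and $g(\mathbf x,\mathbf y)=\mathbb F(z_1,z_2)$, and recall that for $n=2$ the operational-dimension bound forces each of $S_A,S_B$ to be a single qubit, so Alice and Bob share a state $\rho\in\mathcal D(\mathbb C^2\otimes\mathbb C^2)$, apply local reversible (hence unitary) maps $U_{\mathbf x},V_{\mathbf y}$, and Charlie performs a two-outcome measurement on $\mathbb C^4$.

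For the \emph{if} direction I would first observe that a balanced $\mathbf f$ which also makes $(\mathbb F,\mathbf f)$ nontrivial must be genuinely two-bit, hence $\mathbf f\in\{\oplus,\overline{\oplus}\}$ (XOR or XNOR); the other balanced functions are single-bit and excluded by Proposition \ref{prop1}(ii). I then give the protocol: start from $\ket{\Phi^+}=\tfrac1{\sqrt2}(\ket{00}+\ket{11})$, let Alice apply $X^{x_1}Z^{x_2}$ and Bob apply $X^{y_1}Z^{y_2}$. Using the ricochet identity $X^{a}Z^{b}\otimes I\,\ket{\Phi^+}=(I\otimes Z^{b}X^{a})\ket{\Phi^+}$ together with the Pauli commutation relations, the joint state equals, up to a global phase, the Bell state labelled by $(x_1\oplus y_1,\,x_2\oplus y_2)$. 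A Bell-basis measurement therefore returns both parities $x_i\oplus y_i$ deterministically; since XOR/XNOR give $z_i=x_i\oplus y_i$ (resp.\ its complement), Charlie recovers $(z_1,z_2)$ exactly and outputs $\mathbb F(z_1,z_2)$ for \emph{any} $\mathbb F$. This settles every nontrivial pair with balanced $\mathbf f$.

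For the \emph{only if} direction I would argue the contrapositive: a nontrivial $(\mathbb F,\mathbf f)$ with $\mathbf f$ not balanced is not quantum computable. Perfect computation yields a projective two-outcome measurement $\{\Pi_0,\Pi_1=I-\Pi_0\}$ with $\mathrm{supp}\big((U_{\mathbf x}\otimes V_{\mathbf y})\rho(U_{\mathbf x}\otimes V_{\mathbf y})^\dagger\big)\subseteq\mathrm{range}(\Pi_{g(\mathbf x,\mathbf y)})$ for all inputs, so the output-$0$ and output-$1$ states occupy orthogonal subspaces. Nontriviality of $\mathbb F$ forces both outcomes to occur, hence each class support lies in a proper subspace and $\mathrm{rank}\,\rho\le 2$; I would treat the pure case first and then reduce the rank-$2$ case to it, since a two-dimensional support must be carried rigidly onto one of the two complementary planes, giving the same overlap constraints. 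For pure $\rho$ the Schmidt coefficients are invariant under local unitaries, so each cross-class orthogonality relation reads $\bra{\psi}\big(U_{\mathbf x}^{\dagger}U_{\mathbf x'}\otimes V_{\mathbf y}^{\dagger}V_{\mathbf y'}\big)\ket{\psi}=0$, which in the maximally entangled case becomes the single-qubit trace condition $\tr\!\big(U_{\mathbf x}^{\dagger}U_{\mathbf x'}\,(V_{\mathbf y}^{\dagger}V_{\mathbf y'})^{\!T}\big)=0$.

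The crux is then a non-existence statement: no assignment of qubit unitaries $\{U_{\mathbf x}\},\{V_{\mathbf y}\}$ can satisfy the full system of these trace conditions when the output pattern $g$ is non-affine. I would expose the obstruction on two overlapping input sub-rectangles chosen so that the nonlinearity of $\mathbf f$ appears as an \emph{AND-type} labelling on each (three inputs mapped to one class, one to the other); the shared row or column then forces one and the same overlap operator $U_{\mathbf x}^{\dagger}U_{\mathbf x'}$ to be Hilbert--Schmidt orthogonal to two incompatible targets, over-determining $U(2)$ and making a required overlap simultaneously vanish and stay nonzero. Since every non-balanced genuine two-bit $\mathbf f$ (the eight functions of Hamming weight $1$ or $3$) contains such a configuration for every compatible nontrivial $\mathbb F$, the contradiction follows. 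I expect the main difficulty to be exactly this last step carried out uniformly -- ruling out all non-maximally-entangled pure and all rank-$2$ mixed resources and covering the eight non-affine $\mathbf f$ without a brute enumeration; the clean kernel is the $2\times2$ linear-algebra fact that three mutually complementary unitary-overlap constraints are inconsistent.
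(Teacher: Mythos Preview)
Your \emph{if} direction is correct and essentially the paper's protocol: share $\ket{\Phi^+}$, encode with Paulis indexed by the input bits, and Bell-measure to recover $(x_1\oplus y_1,x_2\oplus y_2)$. (The paper uses the coarser two-outcome measurement $\{\mathrm P_{\phi^+},\mathbf I-\mathrm P_{\phi^+}\}$ and a small table of relabellings; your full Bell measurement in fact already proves the oblivious Corollary~\ref{coro3} as well.)

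Your \emph{only if} direction has a genuine gap at exactly the point you flag. You reduce correctly to orthogonality/trace constraints, but the ``non-existence statement'' --- that for every nontrivial pair with unbalanced $\mathbf f$ one can exhibit two overlapping $2\times2$ AND-type sub-rectangles whose shared row forces incompatible Hilbert--Schmidt constraints on a single $U_{\mathbf x}^\dagger U_{\mathbf x'}$ --- is asserted, not proved. You neither show such overlapping rectangles always exist with the required incompatibility, nor carry out the rank-$2$ mixed case; and ``three mutually complementary unitary-overlap constraints are inconsistent'' is not a fact in $U(2)$ without further structure (three traceless Paulis are pairwise HS-orthogonal).

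The paper bypasses this local obstruction hunt with a global counting argument. It first proves a lemma: a nontrivial $(\mathbb F,\mathbf f)\in\mbox{DCLC}(2)$ has global output ratio $1{:}3$ over the $16$ input pairs \emph{iff} $\mathbf f$ is balanced. Then, using Observation~\ref{obs1} (every nontrivial task contains a $1{:}3$ sub-rectangle $\mathcal G$), the three orthogonality relations on $\mathcal G$ force $U_0^A,U_1^A$ (and $U_0^B,U_1^B$) to map $\{\ket0,\ket1\}$ to the \emph{same} orthogonal pair; iterating over all inputs shows every $U_{\mathbf x},V_{\mathbf y}$ does so, hence WLOG the encodings are Paulis on $a\ket{00}+b\ket{11}$. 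With Pauli encodings the $16$ states collapse (up to phase) to four vectors, and the only orthogonal bipartitions give global ratio $1{:}1$ (always) or $1{:}3$ (only when $a=b$). Since $1{:}1$ is trivial, quantum-computable nontrivial forces ratio $1{:}3$, hence $\mathbf f$ balanced by the lemma. This ratio lemma is the missing ingredient that replaces your unfinished sub-rectangle case analysis with a single clean constraint.
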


\begin{proof}
Here we will prove the \textit{if} part, while the \textit{only if} part is presented in the Appendix \ref{6.2.1}.

There are exactly $^4C_2$ balanced Boolean functions $\{0,1\}^2\mapsto\{0,1\}$; out of which $4$ are single bit function and hence trivial (Proposition \ref{prop1}). The remaining two functions are XOR and X-NOR. Let us first discuss the protocol for the case $(\mathbb{F}\equiv\lor,\mathbf{f}\equiv\oplus)$. Alice and Bob start the protocol with the $2$-qubit maximally entangled state $\ket{\phi^{+}}_{AB}:=\frac{1}{\sqrt{2}}(\ket{00}_{AB}+\ket{11}_{AB})$. Depending on the inputs $\mathbf{x}$ and $\mathbf{y}$, they apply local unitary operation  $\sigma_{i}^A$ and $\sigma_{j}^B$ on their respective part of the entangled state, where $i:=2x_1+x_2$ \& $j:=2y_1+y_2$, and  $\sigma_0=\mathbb{I},~\sigma_1=\sigma_X,~\sigma_2=\sigma_X\sigma_Z$ and $\sigma_3=\sigma_Z$. Whenever $\mathbf{x}=\mathbf{y}$, Charlie receives the state $\ket{\phi^+}$, otherwise a state orthogonal to $\ket{\phi^+}$. The required computation can be exactly done by performing the $2$-outcome measurement, $\mathrm{M}_{[2]}\equiv\{\mathrm{P}_{\phi^{+}},\mathbb{I}_4-\mathrm{P}_{\phi^{+}}\}$ and declaring the outcome as $\mathrm{P}_{\phi^{+}}\to 0$ and $\lnot~\mathrm{P}_{\phi^{+}}\to 1$. Other nontrivial computations follow a similar protocol with suitable relabeling of encoding and decoding (see Table \ref{table1}).\par
Also note that there are \textit{eight} different binary functions $\mathbb{F}$ (other than those in Table \ref{table1}), for each of the $f$'s (XOR and XNOR). \textit{Two} of them are constant $[\mathbb{F}(z_1, z_2)=0 \text{ or, } 1]$,  \textit{four} of them are single bit functions $[\mathbb{F}(z_1, z_2)=z_1 \text{ or, } \bar{z}_1 \text{ or, } z_2 \text{ or, } \bar{z}_2]$ and other \textit{two} are XOR and XNOR respectively. Evidently, these functions
satisfy the conditions (i), (ii) and (iii) respectively, listed in Proposition \ref{prop1}. Therefore, they are computable with $1$ cbit communication from each Alice and Bob to Charlie and then obviously with $1$ qubit communication.
\end{proof}
     \begin{center}
		\begin{table}[h!]
			\begin{tabular}{ | m{4cm}| m{4cm} | m{3cm}| m{2cm} | } 
				\hline
				$~~~~~~\mathbf{f}\equiv$ {\bf XNOR}& $~~~~~~\mathbf{f}\equiv$ {\bf XOR}& $~~~~~~~~~\sigma_{j}^{B}$ & {\bf Outcome}\\
				\hline\hline
				$~~\mathbb{F}(z_1,z_2)\equiv (\overline{z_{1}\land z_{2}})$& $~~\mathbb{F}(z_1,z_2)\equiv z_{1}\lor z_{2}$ & $~~~j=2y_{1}+y_{2}$ & $\mathrm{P}_{\phi^+}\to0$ \\
				\hline
				$~~\mathbb{F}(z_1,z_2)\equiv z_{1}\land z_{2}$& $~~\mathbb{F}(z_1,z_2)\equiv (\overline{z_{1}\lor z_{2}})$ & $~~~j=2y_{1}+y_{2}$ & $\mathrm{P}_{\phi^+}\to1$ \\
				\hline		
				$~~\mathbb{F}(z_1,z_2)\equiv (\overline{z_{1}\lor z_{2}})$& $~~\mathbb{F}(z_1,z_2)\equiv z_{1}\land z_{2}$ & $~~~j=2\bar{y}_{1}+\bar{y}_{2}$ & $\mathrm{P}_{\phi^+}\to1$ \\
				\hline		
				$~~\mathbb{F}(z_1,z_2)\equiv z_{1}\lor z_{2}$& $~~\mathbb{F}(z_1,z_2)\equiv (\overline{z_{1}\land z_{2}})$ & $~~~j=2\bar{y}_{1}+\bar{y}_{2}$ & $\mathrm{P}_{\phi^+}\to0$ \\
				\hline
				$~~\mathbb{F}(z_1,z_2)\equiv \bar{z}_{1}\land z_{2}$& $~~\mathbb{F}(z_1,z_2)\equiv z_{1}\land \bar{z}_{2}$ & $~~~j=2\bar{y}_{1}+y_{2}$ & $\mathrm{P}_{\phi^+}\to1$ \\
				\hline
				$~~\mathbb{F}(z_1,z_2)\equiv z_{1}\lor \bar{z}_{2}$& $~~\mathbb{F}(z_1,z_2)\equiv \bar{z}_{1}\lor z_{2}$ & $~~~j=2\bar{y}_{1}+y_{2}$ & $\mathrm{P}_{\phi^+}\to0$ \\
				\hline
				$~~\mathbb{F}(z_1,z_2)\equiv z_{1}\land \bar{z}_{2}$& $~~\mathbb{F}(z_1,z_2)\equiv \bar{z}_{1}\land z_{2}$ & $~~~j=2y_{1}+\bar{y}_{2}$ & $\mathrm{P}_{\phi^+}\to1$ \\
				\hline
				$~~\mathbb{F}(z_1,z_2)\equiv \bar{z}_{1}\lor z_{2}$& $~~\mathbb{F}(z_1,z_2)\equiv z_{1}\lor \bar{z}_{2}$ & $~~~j=2y_{1}+\bar{y}_{2}$ & $\mathrm{P}_{\phi^+}\to0$ \\
				\hline
			\end{tabular}
			\caption{Quantum strategies for nontrivial DCLC($2$) tasks where $\mathbf{f}$ is balanced. Alice's encoding operations $\{\sigma_{j}^A\}$ are same as used in $(\mathbb{F}\equiv\lor,\mathbf{f}\equiv\oplus)$. It is discussed in Appendix \ref{6.2} that other \textit{eight} variants of $\mathbb{F}$ are trivial.
   }\label{table1}
		\end{table}
	\end{center}

	\begin{corollary}\label{coro2}
All the computations of Theorem \ref{theorem1} are also computable in quantum theory if we consider their delayed-choice version, i.e., DC$^2$LC(2).
\end{corollary}

The protocol is discussed in detail in the Appendix \ref{6.2.2}. In this case, Alice and Bob follow the same encoding procedure(s) as of Theorem \ref{theorem1}. However, this time, Charlie performs a $4$-outcome measurement $\mathrm{M}_{[4]}\equiv\{\mathrm{P}_{\phi^+},\mathrm{P}_{\phi^-},\mathrm{P}_{\psi^+},\mathrm{P}_{\psi^-}\}$, where $\ket{\phi^\pm}:=(\ket{00}\pm\ket{11})/\sqrt{2}$ and $\ket{\psi^\pm}:=(\ket{01}\pm\ket{10})/\sqrt{2}$.
\begin{corollary}\label{coro3}
Any nontrivial dual layer computation $(\mathbb{F},\mathbf{f})\in\mbox{DCLC}(n)$ is perfectly computable, along with their delayed-choice version, i.e., \it{DC}$^2$\it{LC(n)} in quantum theory whenever $\mathbf{f}$ is a balanced function.
\end{corollary}


The proof is presented in Appendix \ref{6.2.3}, where we will show that only $\ceil{\frac{n}{2}}$-qubit communication, from each of Alice and Bob to Charlie, suffices to execute all such DCLC ($n$) computations, even with their delayed-choice versions, i.e., DC$^2$LC($n$) . This amounts to (nearly) half of the maximum allowed communication, i.e., $(n-1)$-qubit communication from each. 

We will now demonstrate the strength of bipartite compositions emerged from the broader class of GPTs, for which the state space can be visualized as regular polygons of $\mathbb{R}^2$ embedded in $\mathbb{R}^3$. In accordance with the limitation imposed on OD, all such polygon models allows only two pure preparations for perfect simultaneous discrimination. Taken in isolation, the maximum number of pair-wise distinguishable states for such theories with $n$= even, can be greater than two \cite{Brunner14} and also for those with $n$=odd, classical capacity can be greater than 1-bit \cite{Massar14}. However, interestingly for the proposed distributed computing task with limited communication, none of the extreme bipartite compositions of these GPTs can overpower a bit of classical communication, each from Alice and Bob:

\begin{theorem}\label{theorem2}
	None of the nontrivial computations in DCLC($2$) can be perfectly done in the extreme bipartite models with marginal subsystems described by symmetric polygon model. 
\end{theorem}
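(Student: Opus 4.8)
The plan is to defeat perfect computation separately for the two \emph{extreme} bipartite compositions and uniformly in the polygon size $k\ge 4$, after the reduction furnished by Proposition~\ref{prop2}. Since a nontrivial task requires entanglement in the state \emph{or} the effect space, and since the two extremal compositions realise exactly these alternatives---the maximal-state (PR-type) composition has entangled states but only separable effects, the minimal-state (HS-type) composition has only separable states but entangled effects---it suffices to rule out both; moreover these two sit on opposite ends of the state--effect duality, which I will exploit to treat them together. First I fix coordinates: the $k$-gon system is embedded in $\mathbb{R}^{3}$ with normalised states $(\cos\phi,\sin\phi,1)$ (pure states at $\phi=2\pi j/k$), effects forming the dual cone, and the reversible encodings $\mathcal{T}^{A}_{\mathbf{x}},\mathcal{T}^{B}_{\mathbf{y}}$ realised by the dihedral group $D_{k}$, acting as block matrices $\operatorname{diag}(R,1)$ with $R\in O(2)$ a symmetry of the polygon.

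Next I translate a candidate protocol into algebra. Writing the shared state as a $3\times3$ matrix $W$ and Charlie's decoding effect as $E$, the success table is $T_{\mathbf{x}\mathbf{y}}=\Tr\big(E^{T}\mathcal{T}^{A}_{\mathbf{x}}W(\mathcal{T}^{B}_{\mathbf{y}})^{T}\big)$, which the block form of the dihedral representation splits as $T_{\mathbf{x}\mathbf{y}}=\Tr\big(E_{11}^{T}R_{\mathbf{x}}W_{11}R_{\mathbf{y}}^{T}\big)+p_{\mathbf{x}}+q_{\mathbf{y}}+c$. Thus the \emph{doubly-centred} table---obtained by stripping the row-, column- and constant-terms that any single-bit strategy could already supply---is controlled entirely by the $2\times2$ core $\Tr(E_{11}^{T}R_{\mathbf{x}}W_{11}R_{\mathbf{y}}^{T})$, an expression symmetric in $(E_{11},W_{11})$, so the PR- and HS-type analyses coincide at this level. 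Its rank is bounded by the span of the encoding orbit in the space of $2\times2$ matrices: for rotation encodings this span is two-dimensional, forcing centred rank $\le 2$. I would then check that every nontrivial target $T_{\mathbf{x}\mathbf{y}}=\mathbb{F}(\mathbf{f}(x_{1},y_{1}),\mathbf{f}(x_{2},y_{2}))$ has centred rank at least three---as one verifies directly for the equality-type family $(\mathbb{F}\equiv\lor,\mathbf{f}\equiv\oplus)$ and for the permutation-type targets arising from $\mathbf{f}\equiv\oplus$---which already disposes of all rotation-only (indeed all rank-$\le2$-core) strategies.

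The residual cases are those in which reflections and mixed shared states inflate the core to centred rank three, matching the target; here the rank count is silent and the obstruction must come from the two conditions it ignores. First, $E$ must be a genuine effect, $0\le e\le u$ on the \emph{entire} polygon state space (dually, $W$ a genuine state), not merely on the sixteen encoded points. Second, every entry of $T$ must be \emph{exactly} $0$ or $1$: exactness forces each output-$1$ pair to saturate $e$ and each output-$0$ pair to saturate the complementary effect, so the encoded states must split into two \emph{opposite} exposed faces of the bipartite state space singled out by $E$. I would then show, using the discreteness of $D_{k}$ together with the flat-edge/vertex structure of the polygon (treating even and odd $k$ separately, since the vertex--edge duality differs), that no such pair of opposite faces sorts the sixteen images according to any nontrivial target.

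The hard part will be exactly this last step: once reflections and mixtures are admitted, no pure rank bound decides the question, and the no-go has to be extracted by marrying the exact-discrimination requirement to the positivity of the effect (equivalently, normalisation of the state) and to the scarce facial geometry of the polygon. Carrying this out uniformly for every $k\ge4$, and for both extremes at once via the state--effect duality that interchanges them, is where the real work lies.
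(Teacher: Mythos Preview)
Your approach is genuinely different from the paper's, and the rank-of-the-core idea is a clean structural observation: writing the success table as $\Tr(E_{11}^{T}R_{\mathbf{x}}W_{11}R_{\mathbf{y}}^{T})$ plus an affine part, and bounding the centred rank of the core by $\dim\operatorname{span}\{R_{\mathbf{x}}\}$, kills all rotation-only encodings in one stroke (since $SO(2)$ spans only a two-dimensional subspace of $M_{2}(\mathbb{R})$, while the targets you mention have centred rank three). The state--effect symmetry you note between the two extreme compositions is also correct at the level of this bilinear core.

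The proposal has a real gap exactly where you flag the hard part. Once reflections enter, the encoding orbit can span all of $M_{2}(\mathbb{R})$ and the rank bound becomes vacuous; your fallback---``exact discrimination forces the encoded states into opposite exposed faces, then use the facial geometry of the polygon''---is too schematic to evaluate. It is not clear which faces of the bipartite body you mean, how the discreteness of $D_{k}$ is to be used, or why no face-pair can realise the required pattern. The paper does not argue geometrically here at all: it fixes an extremal entangled state (respectively effect), writes out the click probabilities for all local transformations $\mathcal{T}^{p}_{k}\otimes\mathcal{T}^{q}_{l}$ with $p,q\in\{+,-\}$ (treating odd and even $k$ separately), and shows that on \emph{every} $2\times2$ subgrid $\{\mathbf{x},\mathbf{x}'\}\times\{\mathbf{y},\mathbf{y}'\}$ the two outcomes can only occur in ratio $1{:}1$ or $4{:}0$. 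The punchline is then the combinatorial fact recorded earlier as Observation~\ref{obs1}: every nontrivial DCLC$(2)$ target contains at least one $2\times2$ subgrid with ratio $1{:}3$. Your plan never invokes this $1{:}3$ obstruction, and without it (or a concrete substitute) the reflection case remains open. One side remark: the affine part $p_{\mathbf{x}}+q_{\mathbf{y}}+c$ is not ``what any single-bit strategy could supply''---an arbitrary real function of $\mathbf{x}$ alone is strictly more than one classical bit encodes---though this misreading does not affect the rank bookkeeping itself.
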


The complete proof is discussed in Appendix \ref{6.3}. The following observation, which characterizes a salient feature of all the nontrivial computations, lies in the core of the proof.
\begin{observation}\label{obs1}
	Consider a set $\mathcal{G}:=\{\mathbf{x},\mathbf{x}^\prime,\mathbf{y},\mathbf{y}^\prime\}$, where $\mathbf{x}\neq\mathbf{x}^\prime$ are the inputs at $A$ while $\mathbf{y}\neq\mathbf{y}^\prime$ at $B$ for a DCLC$(2)$ task. Altogether, $\binom{4}{2}\times \binom{4}{2}=36$ different such sets are possible. Evidently, the strings in $\mathcal{G}$ will be mapped into the bit values 0 and 1 respectively in the ratio $4:0$, $2:2$, $0:4$, $1:3$ and $3:1$. It turns out that at-least one $\mathcal{G}$, among the $36$ possibilities, must have the aforesaid ratio either $1:3$ or $3:1$ for every nontrivial DCLC$(2)$ task.
\end{observation}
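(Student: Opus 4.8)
The plan is to encode all outputs of the dual-layer computation into a single $4\times 4$ Boolean matrix and reduce the statement to an elementary parity/separability fact over $\mathbb{F}_2$. First I would index a matrix $M$ by the four possible Alice strings $\mathbf{x}\in\{00,01,10,11\}$ (rows) and the four possible Bob strings $\mathbf{y}$ (columns), with entries $M_{\mathbf{x}\mathbf{y}}:=\mathbb{F}(\mathbf{f}(x_1,y_1),\mathbf{f}(x_2,y_2))\in\{0,1\}$. Each set $\mathcal{G}=\{\mathbf{x},\mathbf{x}^\prime,\mathbf{y},\mathbf{y}^\prime\}$ is precisely a $2\times 2$ submatrix of $M$ spanned by two distinct rows and two distinct columns, and the $36=\binom{4}{2}^2$ sets exhaust all such submatrices. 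The elementary observation tying this to the claim is that the ratio of the four entries being $1:3$ or $3:1$ is equivalent to their sum being \emph{odd} modulo $2$ (one or three ones), whereas the ratios $4:0$, $2:2$, $0:4$ all correspond to an even sum.

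I would then prove the contrapositive: if no $\mathcal{G}$ gives ratio $1:3$ or $3:1$, the task is trivial. By the observation, the hypothesis says every $2\times 2$ submatrix has even entry-sum, i.e.\ $M_{\mathbf{x}\mathbf{y}}\oplus M_{\mathbf{x}\mathbf{y}^\prime}\oplus M_{\mathbf{x}^\prime\mathbf{y}}\oplus M_{\mathbf{x}^\prime\mathbf{y}^\prime}=0$ for all choices. The crux is the algebraic fact that this vanishing ``mixed second difference'' condition forces an additive separation $M_{\mathbf{x}\mathbf{y}}=a(\mathbf{x})\oplus b(\mathbf{y})$ for suitable $a,b:\{0,1\}^2\to\{0,1\}$. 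I would establish this constructively by fixing a reference row $\mathbf{x}_0$ and column $\mathbf{y}_0$ and setting $a(\mathbf{x}):=M_{\mathbf{x}\mathbf{y}_0}\oplus M_{\mathbf{x}_0\mathbf{y}_0}$ and $b(\mathbf{y}):=M_{\mathbf{x}_0\mathbf{y}}$; the even-sum condition on the submatrix spanned by rows $\{\mathbf{x}_0,\mathbf{x}\}$ and columns $\{\mathbf{y}_0,\mathbf{y}\}$ yields exactly $M_{\mathbf{x}\mathbf{y}}=a(\mathbf{x})\oplus b(\mathbf{y})$, with the degenerate cases $\mathbf{x}=\mathbf{x}_0$ or $\mathbf{y}=\mathbf{y}_0$ holding by construction.

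To close, the separation delivers a classical strategy: Alice computes the single bit $a(\mathbf{x})$ and Bob the single bit $b(\mathbf{y})$, each sends it to Charlie (one cbit each, hence operational dimension $2\le 2^{n-1}$ for $n=2$), and Charlie outputs $a(\mathbf{x})\oplus b(\mathbf{y})=M_{\mathbf{x}\mathbf{y}}$. By Definition~\ref{def1} the task is therefore trivial, which is the desired contrapositive. I expect the parity-to-separability step to be the only substantive point, the rest being bookkeeping; the single subtlety worth flagging is to confirm that the abstract factorization genuinely yields a protocol within the allowed communication, which it does since transmitting one bit from each party stays within the permitted operational dimension.
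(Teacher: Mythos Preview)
Your argument is correct. The parity reformulation is exactly right: a $2\times 2$ block has ratio $1:3$ or $3:1$ iff its entry-sum is odd, and the vanishing of all such ``mixed second differences'' over $\mathbb{F}_2$ is precisely the condition that $M$ factors additively as $a(\mathbf{x})\oplus b(\mathbf{y})$. Your explicit construction of $a,b$ from a fixed reference row/column is the standard one and works verbatim, and the resulting $1$-cbit-each protocol indeed sits within the OD bound for $n=2$, so the contrapositive closes.

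As for comparison with the paper: there is essentially nothing to compare against. The paper records Observation~\ref{obs1} without proof; it is presented as a fact presumably extracted from the complete classification of trivial DCLC$(2)$ tasks in Proposition~\ref{prop1} (equivalently, by inspecting the $80$ nontrivial tasks out of $256$). Your route is genuinely different and strictly more informative: it replaces an implicit enumeration by a clean structural argument, and in doing so it makes transparent \emph{why} the absence of a $1:3$ block forces triviality --- namely because it forces the XOR-separability $M_{\mathbf{x}\mathbf{y}}=a(\mathbf{x})\oplus b(\mathbf{y})$, which is manifestly a $1$-bit-each classical protocol. A side benefit is that your argument does not rely on the specific dual-layer form $(\mathbb{F},\mathbf{f})$ at all: it shows that \emph{any} Boolean function of $(\mathbf{x},\mathbf{y})$ on $\{0,1\}^2\times\{0,1\}^2$ with no odd $2\times2$ block is computable with one cbit from each party. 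The paper's approach, by contrast, leans on the $(\mathbb{F},\mathbf{f})$ structure via Proposition~\ref{prop1}.
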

	In the Appendix \ref{6.3}, it is then explicitly shown that none of the extreme bipartite compositions for \textit{even}, as well as \textit{odd}-gon theories can satisfy the above requirement, which prohibits them to execute any nontrivial DCLC($2$). Hence, it is needless to say that none of the DC$^2$LC(2) can be performed in such GPT models.

The strength of the pair of Theorem \ref{theorem1} and \ref{theorem2} are two-fold: Apart from the stronger communication utility of the individual polygon models (as mentioned earlier), their bipartite compositions also overpower quantum correlations both in space and time like scenario. For instance, while the maximal bipartite composition of $n=4$ polygon model exhibits strongest possible Bell-nonlocality \cite{Popescu94}, their minimal composition posses a violation in a recently introduced communication scenario, namely the Hypersignaling game \cite{DallArno17}. In contrast, for the present task of DCLC, which encapsulates an hybridization of both space-like (between the non-communicating agents Alice and Bob) and time-like (from Alice/ Bob to the referee) settings, the status of all such theories is no better than the classical one. On the other hand, the exclusiveness of quantum theory in such a settings direct towards a crucial topological ingredient, i.e., presence of entanglement in both the state and effect space. It is also important to mention that there are possibilities of various other compositions between minimal and maximal compositions of all these GPT models, which contain entangled states, as well as entangled effects \cite{DallArno17}. However, such models restrict some of the reversible transformations, due to the requirement of positive probability of all the effects over all possible preparations. In fact, the transformations are so limited that even not all the product states (effects) can be reversibly transformed among themselves. This leads us to conjecture that same non-go result of Theorem \ref{theorem2} also holds for all these intermediate models. A strong evidence in support of our conjecture has already been reported in \cite{Saha18}, considering a specific dual-layered computation $(\mathbb{F}\equiv\lor,\mathbf{f}\equiv\oplus)\in\mbox{DCLC}(2)$ for all possible intermediate compositions for $n$=4 GPT model. This, in turn, further strengthen our finding: Not only the exotic state and effect space structure of quantum theory, but the continuity in their allowed reversible transformations plays an important role to identify it as an island in theory-space, from the perspective of distributed computing. 

\section{DISCUSSION}
Apart from immense practical importance in present day distant data manipulation, our work introduces an operational task to single out the bipartite composite structure of quantum theory. Despite remarkable advantages in several computation and information protocols, from a foundational point of view it still remains illusive: \textit{why the nature is quantum}? There is no general consensus why our physical world should be modeled by Hilbert space quantum mechanics, which, from a mathematical standpoint, is just an example of model among several other possibilities \cite{Mackey63,Ludwig67}. During last two decades some novel approaches have been developed that find seemingly impossible consequences of post-quantum models and thus reject them to be the candidate theory of the physical world \cite{vanDam,Brassard06,Linden07,Pawlowski09,Navascues09,Fritz13,Aaronson04,Muller12,Pfister13,Banik19,Krumm19}. In this direction, our study proposes non-trivial example of distributive computations, where quantum theory performs better than every possible two-dimensional polygon theories, which are close cousin of a qubit both from geometric and information theoretic perspectives \cite{Janotta11}. The studies made in Refs. \cite{Barrett05,Chao17,Weilenmann20} are worth mentioning at this point. There it has been established that some $3$-party quantum correlations cannot be produced by Popescu-Rohrlich (PR) correlation \cite{Popescu94} which otherwise are possible in quantum theory. The computational limitation of the GPTs established in those works stem from the impossibility of entanglement swapping for generalized nonlocal correlations \cite{Short06}. The present work has stronger implications as we have not invoked any $3$-party correlation. Rather we  show that some $2$-party correlation outside the quantum realm is not as good as $2$-party quantum correlations. Furthermore, those works only address correlations in space-like scenario whereas the present work considers correlation in space-like as well as in time-like scenarios.

It is worth mentioning that the task of Quantum fingerprinting \cite{Buhrman01} is a special and celebrated example of our general DCLC computations. In Quantum fingerprinting Charlie is asked to calculate the function $e(\mathbf{x},\mathbf{y}):=1$ (if $\mathbf{x}=\mathbf{y}$) and $e(\mathbf{x},\mathbf{y}):=0$ (if $\mathbf{x}\neq\mathbf{y}$) using the minimum communication from Alice and Bob who are given two random $n$ bit strings $\mathbf{x}$ and $\mathbf{y}$, respectively, which can be seen as a dual layer computation  $(\mathbb{F}\equiv\lor,\mathbf{f}\equiv\oplus)$. While an exponential gap between classical and quantum resources have been established there, our result concerns with the perfect accomplishment of the task for single-shot case.
\begin{figure}[t!]
	\begin{center}
		\includegraphics[scale=0.5]{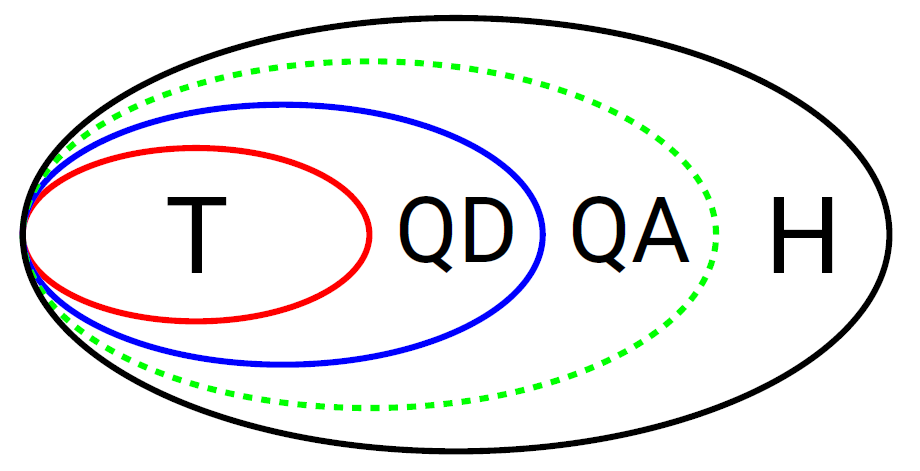}
	\end{center}
	\caption{(Color on-line) A schematic Venn diagram of different dual layer computations as seen from the perspective of quantum theory (considering DCLC(2)). The class T denotes the trivial computations (see Definition \ref{def1}). QD (in between red and blue curves) are the nontrivial computations that can be perfectly done in quantum theory (Theorem \ref{theorem1} \& Corollary \ref{coro2}) and they establish quantum advantage over classical as well as polygonal GPTs (Theorem \ref{theorem2}). QP (in between blue and black curves) represents the class of hard computations which imply probabilistic success in quantum theory [e.g. $(\mathbb{F}\equiv\oplus,\mathbf{f}\equiv\lor)$]. There might exist a subset of computations QA in QP (in between blue and dashed green curve) where quantum theory, although allowing for probabilistic success, provides advantage over polygonal GPTs. Finding an example of such a computation remains an open problem.}\label{fig2}
\end{figure}

Finally we will conclude by mentioning the open problems for single-shot distributed computing, emerged from the present work. Our result is complete in the framework of DCLC($2$) up to the trivial ones and those with perfect quantum accomplishment. However, the class of computations, for which there are probabilistic quantum advantage over the polygonal GPTs, is still unknown (see Fig \ref{fig2}). It is also instructive to completely characterize these classes for DCLC($n$) settings. Lastly, the distributed computing scenario with higher numbers ($\geq3$) of input ports can be a potential topic for further research.	

\section{ACKNOWLEDGEMENTS}
 We thank Guruprsad Kar for many stimulating discussions. We also gratefully acknowledge private communication with Markus P. Müller and thank Claude Crépeau for his useful suggestion and pointing out the relevant reference \cite{Chao17}. SSB acknowledges partial support by the Foundation for Polish Science (IRAP project, ICTQT, contract no. MAB/2018/5, co-financed by EU within Smart Growth Operational Programme). MB acknowledges research grant of INSPIRE-faculty fellowship by the Department of Science and Technology, Government of India, funding from the
National Mission in Interdisciplinary Cyber-Physical systems from the Department of Science and Technology
through the I-HUB Quantum Technology Foundation
(Grant No. I-HUB/PDF/2021-22/008), and the start-up
research grant from SERB, Department of Science and
Technology (Grant No. SRG/2021/000267).

\section{APPENDIX}
\subsection{The Structure of Polygon Theories}\label{6.1}
In the operational framework of general probabilistic theories, as discussed in Section \ref{2.1}, here we will introduce a special class of such theories. Geometrically, these models constitute with an integer $n$ numbers of pure states, situated on the vertices of a symmetric regular polygon. 

{\bf Single system:} For an elementary system, the state space $\Omega_n$ is a regular polygon with $n$ vertices. For a fixed $n$, $\Omega_n$ is the convex hull of $n$ pure states $\{\omega_i\}_{i=0}^{n-1}$, where,
\begin{equation}
	\omega_i=\begin{pmatrix}
		r_n \cos \frac{2 \pi i}{n}\\
		r_n \sin \frac{2 \pi i}{n}\\
		1\end{pmatrix},~~ \mbox{with}~~r_n=\sqrt{\sec(\pi/n)}.
\end{equation}
On the other hand, corresponding effect space $\mathcal{E}_n$, collection of all the possible measurement effects, is the convex hull of the null effect $O$, unit effect $u$, the extremal effects $\{e_j\}_{j=0}^{n-1}$, and their complementary effects $\{\bar{e}_j\}_{j=0}^{n-1}$, where, $\bar{e}_j:=u-e_j$. The null and unit effects are respectively given by $O=(0,0,0)^{\mathrm{T}}~~\text{and}~~u=(0,0,1)^{\mathrm{T}}$, where, $\mathrm{T}$ denotes the matrix transposition. 
The effects $\{e_j\}_{j=0}^{n-1}$ are given by,
\begin{center}
	\begin{tabular}{ c|c } 
		\hline
		Even-gon & Odd-gon  \\
		\hline\hline
		&\\
		$e_j=\frac{1}{2}\begin{pmatrix}
			r_n \cos \frac{(2 j-1) \pi}{n}\\
			r_n \sin\frac{(2 j-1) \pi}{n}\\
			1
		\end{pmatrix}$& ~~~~~$e_j=\frac{1}{1+r_n^2}\begin{pmatrix}
			r_n \cos \frac{2 j \pi}{n}\\
			r_n \sin\frac{2 j \pi}{n}\\
			1
		\end{pmatrix}$  \\ 
		&\\
		\hline
	\end{tabular}
\end{center}
For even-gon, it turns out that, $\bar{e}_j:=u-e_j=e_{(j \oplus_n\frac{n}{2})}$, where, $\oplus_n$ denotes {\it addition modulo $n$}. Therefore, the effects $\{e_j\}_{j=0}^{n-1}$ as well as their complementary effects are not the pure effects only, but they are the ray extremals of the effect cone $(V^\star)_+$ also. In contrast, for odd-gon, only $\{e_j:=u-e_j\}_{j=0}^{n-1}$ are the ray extremals, whereas their complementary effects $\{\bar{e}_j:=u-e_j\}_{j=0}^{n-1}$ are not despite being pure. 

For any $n$-gon theory, the set of the reversible transformations (RT), $\mathbb{T}_n$, is the dihedral group of order $2n$ containing $n$ reflections and $n$ rotations, {\it i.e.},
\begin{align}
	\mathbb{T}_n\equiv\left\{\mathcal{T}_k^p~|~k=0,\cdots,n-1;~\&~p\in\{+,-\} \right\},\nonumber\\
	\mathcal{T}_k^p:= 
	\begin{pmatrix}
		\cos\frac{2\pi k}{n} & -p\sin\frac{2\pi k}{n} & 0 \\
		\sin\frac{2\pi k}{n} & p\cos\frac{2\pi k}{n} & 0 \\
		0  & 0  & 1   
	\end{pmatrix},
\end{align}
with $p=+$ corresponds to the rotation and $p=-$ to the reflection.

{\bf Bipartite system:} Any bipartite composition of $n$-gon systems must include $n^2$ factorized states,
\begin{eqnarray}
	\Omega^{product}:=\left\{\omega_{ni+j}^{A\otimes B}:=\omega_i^A\otimes\omega_j^B~|~i,j\in\{0,\cdots,n-1\}\right\}\subset\Omega_{n^{\otimes2}}:=\Omega^{AB}_n.
\end{eqnarray}
We will use the superscript $A\otimes B$ to denote factorizability. For the bipartite system, the product effects are of the form,
\begin{eqnarray}
	\mathcal{E}^{product}&:=&\left\{g^A\otimes g^B\right\}\subset\mathcal{E}^{AB}_n,\\
	\mbox{where,}&&~~g^X\in\left\{O^X,u^X\right\}\bigcup\left\{e_i^X,~\bar{e}_i^X\right\}_{i=0}^{n-1};~~~X\in\{A,B\}.\nonumber
\end{eqnarray}
Since, $p(e^{A\otimes B}|\omega^{A\otimes B})=p(e^A|\omega^{A})p(e^{B}|\omega^{B})$, therefore,
\begin{equation}
	0\leq p(e^{A\otimes B}|\omega^{A\otimes B})\le 1;~~\forall~e^{A\otimes B}\in\mathcal{E}^{product}~\&~\forall~\omega^{A\otimes B}\in\Omega^{product}.  
\end{equation}

Apart from these factorized states and effects, a bipartite system may also allow non-factorized (entangled) states and effects that we will denote as $\omega^{AB}$ and $e^{AB}$ respectively. Of course, they must satisfy the consistency requirements:
\begin{eqnarray}
	0\leq p(e^{A\otimes B}|\omega^{AB})\le 1,~\forall~e^{A\otimes B}\in\mathcal{E}^{product},\\
	0\leq p(e^{AB}|\omega^{A\otimes B})\le 1,~\forall~\omega^{A\otimes B}\in\Omega^{product}.
\end{eqnarray}
In Ref.\cite{Janotta11}, the authors have introduced an maximally entangled state for bipartite $n$-gon theories both for {\it odd} and {\it even} $n$. Applying all possible local RTs $\left\{\mathcal{T}^p_k\right\}$ on Alice's part and $\left\{\mathcal{T}^q_l\right\}$ on Bob's part, we can get all the other entangled states as follows :\\
\underline{\bf Odd n:}
\begin{equation}\label{e2}
	\begin{aligned}
		\omega^{AB}_{kl}(p,q):=
		\begin{pmatrix}
			\cos \left(\frac{2\pi}{n}(k-l)\right) & -\sin \left(\frac{2\pi}{n}(k-l)\right) & 0\\
			\sin \left(\frac{2\pi}{n}(k-l)\right) & ~~~\cos \left(\frac{2\pi}{n}(k-l)\right) & 0\\
			0 & 0 & 1
		\end{pmatrix};~~\mbox{when} ~~p=q,
		&\\
		\omega^{AB}_{kl}(p,q):=
		\begin{pmatrix}
			\cos \left(\frac{2\pi}{n}(k+l)\right) & ~~~\sin \left(\frac{2\pi}{n}(k+l)\right) & 0\\
			\sin \left(\frac{2\pi}{n}(k+l)\right) & -\cos \left(\frac{2\pi}{n}(k+l)\right) & 0\\
			0 & 0 & 1
		\end{pmatrix};~~\mbox{when} ~~p\neq q.
	\end{aligned}
\end{equation}
\underline{\bf Even n:}
\begin{equation}\label{e3}
	\begin{aligned}
		\omega^{AB}_{kl}(p,q):=
		\begin{pmatrix}
			\cos \left(\frac{2\pi}{n}(k-l)-p\frac{\pi}{n}\right) & -\sin \left(\frac{2\pi}{n}(k-l)-p\frac{\pi}{n}\right) & 0\\
			\sin \left(\frac{2\pi}{n}(k-l)-p\frac{\pi}{n}\right) & ~~~\cos \left(\frac{2\pi}{n}(k-l)-p\frac{\pi}{n}\right) & 0\\
			0 & 0 & 1
		\end{pmatrix};~~\mbox{when} ~~p=q,
		&\\
		\omega^{AB}_{kl}(p,q):=
		\begin{pmatrix}
			\cos \left(\frac{2\pi}{n}(k+l)-p\frac{\pi}{n}\right) & ~~~\sin \left(\frac{2\pi}{n}(k+l)-p\frac{\pi}{n}\right) & 0\\
			\sin \left(\frac{2\pi}{n}(k+l)-p\frac{\pi}{n}\right) & -\cos \left(\frac{2\pi}{n}(k+l)-p\frac{\pi}{n}\right) & 0\\
			0 & 0 & 1
		\end{pmatrix};~~\mbox{when} ~~p\neq q.
	\end{aligned}
\end{equation}
Similarly, all the possible maximally entangled effects are given by,\\
\underline{\bf Odd n:}
\begin{equation}\label{e4}
	\begin{aligned}
		e^{AB}_{kl}(p,q):=\frac{1}{1+r_n^2}
		\begin{pmatrix}
			\cos \left(\frac{2\pi}{n}(k-l)\right) & -\sin \left(\frac{2\pi}{n}(k-l)\right) & 0\\
			\sin \left(\frac{2\pi}{n}(k-l)\right) & ~~~\cos \left(\frac{2\pi}{n}(k-l)\right) & 0\\
			0 & 0 & 1
		\end{pmatrix};~~\mbox{when} ~~p=q,
		&\\
		e^{AB}_{kl}(p,q):=\frac{1}{1+r_n^2}
		\begin{pmatrix}
			\cos \left(\frac{2\pi}{n}(k+l)\right) & ~~~\sin \left(\frac{2\pi}{n}(k+l)\right) & 0\\
			\sin \left(\frac{2\pi}{n}(k+l)\right) & -\cos \left(\frac{2\pi}{n}(k+l)\right) & 0\\
			0 & 0 & 1
		\end{pmatrix};~~\mbox{when} ~~p\neq q.
	\end{aligned}
\end{equation}
\underline{\bf Even n:}
\begin{equation}\label{e5}
	\begin{aligned}
		e^{AB}_{kl}(p,q):=\frac{1}{2}
		\begin{pmatrix}
			\cos \left(\frac{2\pi}{n}(k-l)-p\frac{\pi}{n}\right) & -\sin \left(\frac{2\pi}{n}(k-l)-p\frac{\pi}{n}\right) & 0\\
			\sin \left(\frac{2\pi}{n}(k-l)-p\frac{\pi}{n}\right) & ~~~\cos \left(\frac{2\pi}{n}(k-l)-p\frac{\pi}{n}\right) & 0\\
			0 & 0 & 1
		\end{pmatrix};~~\mbox{when} ~~p=q,
		&\\
		e^{AB}_{kl}(p,q):=\frac{1}{2}
		\begin{pmatrix}
			\cos \left(\frac{2\pi}{n}(k+l)-p\frac{\pi}{n}\right) & ~~~\sin \left(\frac{2\pi}{n}(k+l)-p\frac{\pi}{n}\right) & 0\\
			\sin \left(\frac{2\pi}{n}(k+l)-p\frac{\pi}{n}\right) & -\cos \left(\frac{2\pi}{n}(k+l)-p\frac{\pi}{n}\right) & 0\\
			0 & 0 & 1
		\end{pmatrix};~~\mbox{when} ~~p\neq q.
	\end{aligned}
\end{equation}
For an arbitrary $n$, the set of all possible RTs for bipartite system is given by
\begin{eqnarray}\label{tran}
	\mathbb{T}_{n^{\otimes2}}&:=&\mathbb{T}^{AB}\equiv\left\{\mathcal{S},\mathcal{T}^p_k\otimes\mathcal{T}^q_l\right\},\\
	&&k,l\in\{0,\cdots,n-1\};~~~~p,q\in\{+,-\}.\nonumber
\end{eqnarray}
$\mathcal{S}$ is the SWAP map whose action is defined as,
\begin{equation}
	\mathcal{S}(\omega^A\otimes\omega^B)=\omega^B\otimes\omega^A;~~\forall~\omega^A\in\Omega^A~\&~\omega^B\in\Omega^B.
\end{equation}
\begin{remark}\label{rem4}
	The SWAP map is a global transformation, {\it i.e.}, it cannot be implemented locally. Alternatively, any local transformation $\mathcal{T}\in\mathbb{T}^{AB}$ never maps a product state (effect) to an entangled one and vice versa. In other words, the set of product states (effects) and the set of entangled states (effects) are two disconnected islands under the reversible transformation $\mathbb{T}^{AB}$.
\end{remark}
Positivity of the predicted probabilities imposes restrictions on the states, effects and transformations that can be allowed together in a composite system. Satisfying this consistency requirement, several composite models are possible. These models can be classified into three main types as discussed below. 

\textbf{Type-I:} {\it Entangled states product effects model} : In this case, all possible product as well as entangled states (listed above) are allowed, {\it i.e.}, $\Omega^{AB}_n$ is the convex hull of the set\\ $\left\{\omega_{ni+j}^{A\otimes B}, \omega_{kl}^{AB}(p,q)~|~i,j,k,l\in\{0,\cdots,n-1\};~p,q\in\{+,-\}\right\}$,\\ whereas the effects are only product in nature. Due to the presence of entangled states, such model can exhibit Bell {\it nonlocality} \cite{Bell64}. In fact, such a model can be stronger in {\it space-like} correlation by revealing more nonlocal behaviour than quantum theory \cite{Popescu94}.   

\textbf{Type-II:} {\it Product states entangled effects model} : It allows only the product states, {\it i.e.}, $\Omega^{AB}$ is the convex hull of the set $\left\{\omega_{ni+j}^{A\otimes B}~|~i,j\in\{0,\cdots,n-1\}\right\}$. However, it allows all possible product as well as entangled effects. Such a model is {\it local} by construction. Due to the presence of all possible entangled effects, this model can also exhibit peculiar feature. For instance, the authors have shown in Ref. \cite{DallArno17} that such a model can allow {\it time-like } correlations that are stronger than quantum theory.

\textbf{Type-III:} {\it Dynamically restricted models} : There can be some models which allow some entangled states along with some (suitably chosen) entangled effects unlike the \textbf{Type-I} and \textbf{Type-II} models. Further, due to the consistency requirement ({\it i.e.}, positivity of the outcome probability), not all the reversible transformations can be allowed when both of teh entangled states and entangled effects are incorporated; hence, it is named as `dynamically restricted models'. Such restriction prevents even all the pure states (effects) to be mapped to each other under reversible transformation which makes \textbf{Type-III} models quite uninteresting. 
\subsection{Nontrivial DCLC and their quantum computability}\label{6.2}
\subsubsection{Proof of the \textit{only if} part of Theorem 1} \label{6.2.1}
To begin with, we will first establish the following lemmas:
\begin{lemma}\label{lemma1}
	Any nontrivial dual layer computation $(\mathbb{F},\mathbf{f})\in\mbox{DCLC}(2)$ maps the set of input stings $\{\mathbf{x},\mathbf{y}\}$ into the binary bit values in $1:3$ ratio \textit{if and only if} $\mathbf{f}$ is balanced.  
\end{lemma}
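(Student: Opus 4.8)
The plan is to reduce the lemma to a counting statement about the $4\times4$ Boolean table $M_{\mathbf{x},\mathbf{y}}:=\mathbb{F}(\mathbf{f}(x_1,y_1),\mathbf{f}(x_2,y_2))$, whose rows and columns are indexed by Alice's and Bob's inputs $\mathbf{x},\mathbf{y}\in\{0,1\}^2$. Here ``$1:3$ ratio'' means that, over all $16$ input pairs $(\mathbf{x},\mathbf{y})$, one output bit occurs exactly $4$ times and the other exactly $12$ times. The crucial structural remark is that as $(\mathbf{x},\mathbf{y})$ ranges over all pairs, the two coordinate pairs $(x_1,y_1)$ and $(x_2,y_2)$ range \emph{independently} over $\{0,1\}^2$; hence, writing $m:=|\mathbf{f}^{-1}(1)|$ for the number of $1$'s in $\mathbf{f}$, the induced value $(z_1,z_2)=(\mathbf{f}(x_1,y_1),\mathbf{f}(x_2,y_2))$ attains $(0,0),(0,1),(1,0),(1,1)$ with multiplicities $(4-m)^2,\,m(4-m),\,m(4-m),\,m^2$ respectively. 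Consequently the number of $1$'s in the table is
\[
N_1=(4-m)^2\,\mathbb{F}(0,0)+m(4-m)\big[\mathbb{F}(0,1)+\mathbb{F}(1,0)\big]+m^2\,\mathbb{F}(1,1),
\]
and the table realises a $1:3$ ratio precisely when $N_1\in\{4,12\}$.

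For the \emph{if} direction I would set $m=2$ (balanced $\mathbf{f}$): all four multiplicities equal $4$, so $N_1=4\,|\mathbb{F}^{-1}(1)|$. Nontriviality excludes a constant $\mathbb{F}$ (where $|\mathbb{F}^{-1}(1)|\in\{0,4\}$) by Proposition~\ref{prop1}(i), and it also excludes a balanced $\mathbb{F}$: a balanced $\mathbb{F}$ is either single-bit (trivial by Proposition~\ref{prop1}(ii)) or $\oplus/\overline{\oplus}$, and the latter composed with a balanced $\mathbf{f}$ reduces to a parity that each party can pre-compute, hence trivial as well. Thus $|\mathbb{F}^{-1}(1)|\in\{1,3\}$ and $N_1\in\{4,12\}$, giving the $1:3$ ratio.

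For the \emph{only if} direction I would argue contrapositively: assuming $\mathbf{f}$ is not balanced, I show no nontrivial task can be $1:3$. A constant $\mathbf{f}$ is trivial, so $m\in\{1,3\}$; the relabelling $\mathbf{f}\mapsto\overline{\mathbf{f}}$ together with $\mathbb{F}(z_1,z_2)\mapsto\mathbb{F}(\overline{z_1},\overline{z_2})$ leaves the table unchanged and preserves both nontriviality and the single-bit property of $\mathbb{F}$, so it suffices to treat $m=1$, where the multiplicities are $(9,3,3,1)$. Enumerating the subset sums of $\{9,3,3,1\}$ shows that $N_1=4$ can arise \emph{only} as $1+3$ and $N_1=12$ only as $9+3$; in every such case $\mathbb{F}$ is forced to depend on a single input bit, i.e.\ $\mathbb{F}$ is a single-bit function and the task is trivial by Proposition~\ref{prop1}(ii). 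Hence a nontrivial task with unbalanced $\mathbf{f}$ has $N_1\notin\{4,12\}$, so its table is not $1:3$.

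The main obstacle I anticipate is the \emph{only if} direction: one must show that the \emph{only} way an unbalanced $\mathbf{f}$ yields a $4{:}12$ table is through a single-bit $\mathbb{F}$, which is exactly where the task becomes trivial. This needs the precise multiplicity enumeration above rather than a soft symmetry argument. A secondary subtlety, on the \emph{if} side, is the discarding of balanced $\mathbb{F}$: this is not visible to the counting alone (it would give the spurious value $N_1=8$) and is instead ruled out by the parity reduction that makes $(\oplus,\text{balanced})$ trivial, so Proposition~\ref{prop1} must be invoked throughout to remain within the class of nontrivial tasks.
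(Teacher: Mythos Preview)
Your proof is correct and follows essentially the same route as the paper's own argument: both compute the multiplicities $(4-m)^2,\,m(4-m),\,m(4-m),\,m^2$ of the pairs $(z_1,z_2)$, use them to see that a balanced $\mathbf{f}$ with nontrivial $\mathbb{F}$ forces the $1{:}3$ ratio, and in the converse direction show that an unbalanced $\mathbf{f}$ can yield a $1{:}3$ table only when $\mathbb{F}$ is a single-bit function (hence trivial by Proposition~\ref{prop1}). Your presentation via the explicit formula for $N_1$ and the subset-sum enumeration of $\{9,3,3,1\}$ is somewhat more systematic than the paper's case listing, but the underlying idea is identical.
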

\begin{proof}
	A $\{0,1\}^2\to\{0,1\}$ Boolean function produces a binary output in either of the three possible ratios -- $4:0$ (constant), $1:1$ (balanced) and $1:3$ (unbalanced). Also for a balanced function, there are only two possibilities for it: either \textit{single-bit}  or, \textit{parity} function (XOR or, XNOR). Now, if $\mathbf{f}$ is balanced and also a single-bit function, then for any $\mathbb{F}$, the dual-layered computation $(\mathbb{F},\mathbf{f})$ is trivial (see Proposition \ref{prop1}). On the other hand, if $\mathbf{f}$ is balanced but not a single bit (i.e., either XOR or, XNOR), then the pair $(\mathbb{F},\mathbf{f})$ is trivial for $\mathbb{F}$ being a constant (condition (i) in Proposition \ref{prop1}) or, a balanced (conditions (ii) and (iii) in Proposition \ref{prop1}) function.\par
    So the only possibility for nontrivial $(\mathbb{F},\mathbf{f})$, with balanced $\mathbf{f}$, is that $\mathbb{F}$ is unbalanced. Now, for balanced $f$ the bit-pair $z_1z_2$ can produce each of the values $\{00,01,10,11\}$ exactly twice. This confirms that the pair $(\mathbb{F},\mathbf{f})$, with unbalanced $\mathbb{F}$, will produce $0$ and $1$ in $1:3~(\text{or, }3:1)$ ratio and hence proves the \textit{if} part.
	
	If $\mathbf{f}$ is unbalanced and $\mathbb{F}$ is balanced, the output will be either $1:3$ or $3:5$. Alternatively, it will be among $1:15$, $3:13$ and $7:9$ for both of $\mathbf{f}$ and $\mathbb{F}$ being unbalanced. Now, consider an unbalanced $\mathbf{f}$ where the `output bit value $0:$ output bit value $1=1:3$'. In this case, bit values of $z_1z_2$ follow the ratio $00:01:10:11=1:3:3:9$. If $\mathbb{F}$ is a balanced function, such that, $\{00,01\}\to 0/1$ or $\{00,10\}\to 0/1$, output of $(\mathbb{F}$, $\mathbf{f})$ is in $1:3$ ratio. But in this case, $\mathbb{F}$ being a single-bit function makes $(\mathbb{F}$, $\mathbf{f})$ trivial (Proposition \ref{prop1}). Therefore, no other nontrivial $(\mathbb{F},\mathbf{f})$ can be in $1:3$ ratio except $\mathbf{f}$ being a balanced function.
\end{proof} 
\begin{lemma}\label{lemma2}
    Any dual layer function of DCLC($2$) with balanced final output will be a trivial computation.
\end{lemma}
\begin{proof}
Suppose that the function $f$ is neither constant, nor balanced. Then for $f(x,y)$ the 'Output bit value $0$: Output bit value $1$'= $1:3$ (or the reverse). Consequently, for such a $f$ the values of $z_{1}z_{2}$ will be $\{00,~01,~10,~11\}$ in a ratio $1:3:3:9$, identifying $z_{i}=f(x_{i}, y_{i})$. Now, observe that no $\mathbb{F}$ can be defined on the arguments $\{z_{1},z_{2}\}$, which will produce the final output $0$ and $1$ in $1:1$ ratio. Therefore, for final output to be $1:1$, either $f$ is constant or, balanced. If $f$ is constant, then from Proposition. 1 the computation $(\mathbb{F},f)$ is trivial. On the other hand, if $f$ is balanced then the pair $\{z_{1},z_{2}\}$ takes the values $00:01:10:11$ uniformly. Then the function $\mathbb{F}$ should also be either a single-bit (which is again trivial from Proposition. 1), or balanced function to generate the final output in $1:1$ ratio. Hence, if both of the function $\mathbb{F}$ and $f$ are balanced but not single bit, then they can be XOR or, XNOR and in both cases they are trivial according to condition (iii) of Proposition. 1. This finally helps us to conclude that the final output of the pair $(\mathbb{F},f)$ is $1:1$ only if the computation is trivial.
\end{proof}

Lemma \ref{lemma1} and \ref{lemma2} assures that to prove the {\it only if} part of Theorem \ref{theorem1}, it is sufficient to prove that no quantum strategy (entangled states along with local unitaries and two outcome measurements) can produce two disjoint subspaces containing states other than $1:3~\text{or}~1:1$. Consider that Alice and Bob start their protocol with a pure entangled state $\ket{\psi}=a\ket{00}+b\ket{11}$, where, $\{a,b\}\in \mathbb{R}$, s.t., $a^{2}+b^{2}=1$ without loss of generality . They have some unitary encoding strategies, $\{U_{i}^{A}\}_{i=0}^3$ and $\{U_{j}^{B}\}_{j=0}^3$, respectively. Now, according to the Observation \ref{obs1}, for every nontrivial DCLC($2$) there exists at least a group of four input stings $\mathcal{G}:=\{\mathbf{x},\mathbf{x}^\prime,\mathbf{y},\mathbf{y}^\prime\}$ ($\mathbf{x}\neq\mathbf{x}^\prime$ \& $\mathbf{y}\neq\mathbf{y}^\prime$) that follows the ratio $1:3$.  Considering Alice's and Bob's encoding as $U_{0}^{A}, U_{1}^{A}$ and $U_{0}^{B}, U_{1}^{B}$, the resulting encoded states read 
\begin{eqnarray}\label{e6}
	\nonumber	\mathbf{x},\mathbf{y}\longmapsto \ket{\xi_{1}}=a\ket{\psi_{0}\phi_{0}}+b\ket{\psi^\perp_{0}\phi^\perp_{0}},\\
	\nonumber		\mathbf{x}^\prime,\mathbf{y}\longmapsto \ket{\xi_{2}}=a\ket{\psi_{1}\phi_{0}}+b\ket{\psi^\perp_{1}\phi^\perp_{0}},\\
	\nonumber \mathbf{x},\mathbf{y}^\prime\longmapsto \ket{\xi_{3}}=a\ket{\psi_{0}\phi_{1}}+b\ket{\psi^\perp_{0}\phi^\perp_{1}},\\
	\nonumber \mathbf{x}^\prime,\mathbf{y}^\prime\longmapsto \ket{\xi_{4}}=a\ket{\psi_{1}\phi_{1}}+b\ket{\psi^\perp_{1}\phi^\perp_{1}},
\end{eqnarray}
where, $\ket{\psi_i}=U^A_i\ket{0}~\&~\ket{\phi_i}=U^B_i\ket{0}$, for $i\in\{0,1\}$. The orthogonality conditions $\langle\xi_{1}|\xi_{j}\rangle=0,~\forall~j\in\{2,3,4\}$, imply $\ket{\psi_{1}}=\ket{\psi^\perp_{0}}, \ket{\psi^\perp_{1}}=\pm\ket{\psi_{0}}$ and $\ket{\phi_{1}}=\ket{\phi^\perp_{0}},\ket{\phi^\perp_{1}}=\mp\ket{\phi_{0}}$. In other worlds, both $U_{0}^{A}$ \& $U_{1}^{A}$ ($U_{0}^{B}$ \& $U_{1}^{B}$) map the states $\{\ket{0},\ket{1}\}$ into same orthogonal pairs $\{\ket{\psi_0},\ket{\psi^\perp_0}\}$ ($\{\ket{\phi_0},\ket{\phi^\perp_0}\}$). For decoding, Charlie performs the measurement, $\mathrm{M}^{\xi_1}_{[2]}\equiv\{\mathrm{P}_{\xi_1},\mathbb{I}_4-\mathrm{P}_{\xi_1}\}$ and assigns the outcome as $\mathrm{P}_{\xi_1}\to 0$ and $\lnot~\mathrm{P}_{\xi_1}\to 1$. For every input string $\mathbf{x}$, either ($\mathbf{x}$, $\mathbf{y}$) yields outcome $0$, whereas ($\mathbf{x}$, $\mathbf{y}^\prime$) yields $1$ (where $\mathbf{y}\neq \mathbf{y}^\prime$ belong to Bob), or $\mathbf{x}$ forms a group like $\mathcal{G}$. In both these cases,  all the Unitaries $\{U^A_i\}_{i=0}^3$ ($\{U^B_j\}_{j=0}^3$) map the states $\{\ket{0},\ket{1}\}$ into same orthogonal pairs $\{\ket{\psi_0},\ket{\psi^\perp_0}\}$ ($\{\ket{\phi_0},\ket{\phi^\perp_0}\}$). Therefore, the encoding by Alice and Bob, without loss of any generality, can be chosen to be the Pauli matrices $\{\sigma_{i}\}_{i=0}^{3}$. Let us denote $(\sigma_{i}^{A}\otimes\sigma_{j}^{B})\ket{\psi}=\ket{\xi}_{ij}$, where, $\{i,j\}\in\{0,...,3\}$. Note that $\ket{\xi}_{ij}\sim\ket{\xi}_{kl}$ (up-to global phase) if $i+k=j+l=3$. To compute DCLC ($2$) perfectly, Charlie performs a measurement that divides the communicated bipartite states in two orthogonal subspace. Evidently, there exists only one such measurement which divides the above states in $\{\ket{00},\ket{11}\}$ and $\{\ket{01},\ket{10}\}$ subspace in $1:1$ ratio, hence achieves a trivial computation (see Lemma \ref{lemma2}). Therefore, the dual-layer computations for which the final outcome is balanced, {\it i.e.}, only the trivial ones can be performed with non-maximally pure entangled states. Whenever $a=b=\frac{1}{\sqrt{2}}$, the subspace can also be divided in $1:3$ ratio and \textit{no other} choice is possible at all.

\subsubsection{Proof of Corollary 2}\label{6.2.2}
Let us consider $\mathbf{f}$ as a balanced function in $(\mathbb{F},\mathbf{f})$, while $\mathbb{F}$ is delayed-choice, i.e., declared later, once after Alice and Bob communicate their respective bit information to Charlie. The encoding protocol here is similar to that of Theorem \ref{theorem1}. Depending upon the input strings $\mathbf{x}$ \& $\mathbf{y}$, Charlie receives the bipartite state as follows: 
\begin{center}
	\begin{tabular}{ c|c } 
		\hline
		inputs $\mathbf{x},~\mathbf{y}$ & $\sigma_{2x_1+x_2}^A\otimes\sigma_{2y_1+y_2}^B\ket{\phi^{+}}_{AB}$  \\
		\hline\hline
		$x_1=y_1~\&~x_2=y_2$& $\ket{\phi^{+}}_{AB}$  \\ 
		\hline
		$x_1\neq y_1~\&~x_2=y_2$& $\ket{\phi^{-}}_{AB}:=\frac{1}{\sqrt{2}}(\ket{00}_{AB}-\ket{11}_{AB})$  \\ 
		\hline
		$x_1=y_1~\&~x_2\neq y_2$& $\ket{\psi^{+}}_{AB}:=\frac{1}{\sqrt{2}}(\ket{01}_{AB}+\ket{10}_{AB})$  \\ 
		\hline
		$x_1\neq y_1~\&~x_2\neq y_2$& $\ket{\psi^{-}}_{AB}:=\frac{1}{\sqrt{2}}(\ket{01}_{AB}-\ket{10}_{AB})$  \\ 
		\hline
	\end{tabular}
\end{center}
For decoding, Charlie performs the $4$-outcome Bell measurement, $\mathrm{M}_{[4]}\equiv\{\mathrm{P}_{\phi^+},\mathrm{P}_{\phi^-},\mathrm{P}_{\psi^+},\mathrm{P}_{\psi^-}\}$. He, then, calculates $z_i=\mathbf{f}(x_i,y_i)$ for $\mathbf{f}\in\{\mbox{XOR},\mbox{XNOR}\}$ and computes the final outcome $\mathbb{F}(z_1,z_2)$. This suffices to compute all the nontrivial computations as in Theorem \ref{theorem1} in a delayed-choice manner.
\subsubsection{Proof of Corollary 3}\label{6.2.3}

Note that, it is clear from the above proof of Corollary \ref{coro2} that by performing the complete Bell measurement, Charlie is able to obtain the individual $z_{i}=f(x_{i}, y_{i})$, whenever the function $f$ is XOR or, XNOR. Therefore, following the same DCLC($2$) encoding protocol, for each two successive bits of their $n$-bit string, Alice and Bob will use a maximally entangled state and thus they require $n/2$-ebits for even $n$. Alternatively, For odd $n$, each of them requires $(n-1)/2$-ebits for first $(n-1)$-bits and $1$ product qubit for their last bit of information. After getting the $z_{i}$ values Charlie can evidently compute the given $\mathbb{F}$ and this completes the proof for both DCLC($n$) and DC$^{2}$LC($n$) whenever $f$ is balanced.
\subsection{Proof of Theorem 2}\label{6.3}
Since the operational dimension of any polygonal model $(\Omega_n,\mathcal{E}_n,\mathbb{T}_n)$ is $2$, both Alice and Bob are allowed to communicate one such system while performing a DCLC($2$) task. However, they can consider some composite models allowing entanglement. Now, for encoding, they will apply some reversible transformations on their respective part. Therefore, Remark \ref{rem4} leads us as follows : 
\begin{remark}\label{rem5}
	Encoded states, received by Charlie, are all either entangled or product.  
\end{remark}
We will now prove the Theorem \ref{theorem2} for {\bf Type-I} and {\bf Type-II} composite model. For odd-gon and even-gon, the proof will be discussed separately. \\\\ 
\underline{\it Odd-gon theory}\\\\
{\bf Type-I Models} : According to Remark \ref{rem5}, the encoded states, received by Charlie, are product states if Alice and Bob start their protocol with a product one. However, for such a scenario the encoding systems as well as the decoding devices (according the construction of \textit{Type-I} models) contain no entanglement. Therefore, by Proposition \ref{prop2} (in the main text) none of nontrivial DCLC tasks can be performed  in this case.\par
On the other hand, Alice and Bob may start the protocol with an entangled state. Then, Charlie receives all the encoded states as entangled, followed by Remark \ref{rem5}. A straightforward calculation shows that, $p(\bar{e}_i^A\otimes \bar{e}_j^B|\omega^{AB}_{kl}(p,q))\neq 0,~\forall i,j,k,l\in\{0,...,(n-1)\}~\&~\forall~\omega^{AB}_{kl}(p,q)$ in \eqref{e2} which leads to an unavoidable ambiguity while the decoding operation is performed by Charlie.\\\\
{\bf Type-II Models} : A decoding strategy with product effects ensures the equivalence with a classical strategy for a nontrivial DCLC($2$), and hence perfect accomplishment is impossible. Let us then move to entangled decoding strategies, and we consider, without loss of generality, Charlie's decoding measurement, $\mathcal{M}^{AB}\equiv\{e^{AB}_{00}(++),\bar{e}^{AB}_{00}(++)\}$. Suppose, Alice encodes her strings $\mathbf{x}$ and $\mathbf{x}^\prime~(\neq\mathbf{x})$ into the states $\omega_k^A$ and $\omega_l^A$ respectively, whereas the strings $\mathbf{y}$ and $\mathbf{y}^\prime~(\neq\mathbf{y})$, in Bob's side, are encoded by $\omega_s^B$ and $\omega_t^B$ respectively. For a particular encoded state, any of these two effects should get clicked perfectly in case of unambiguous decoding which leads to the following restrictions (see Table \ref{table1}).
\begin{center}
\begin{table}[h!]
	\begin{tabular}{|c||c|c| } 
		\hline
		~~Input strings~~ & ~~~$e^{AB}_{00}(++)$ clicks~~~ & ~~~$\bar{e}^{AB}_{00}(++)$ clicks~~~\\
		\hline\hline
		$\mathbf{x},\mathbf{y}$& $k=s$ & $k=s\oplus_n\frac{n\pm 1}{2}$\\ 
		\hline
		$\mathbf{x}^\prime,\mathbf{y}^\prime$& $l=t$ & $l=t\oplus_n\frac{n\pm 1}{2}$\\ 
		\hline
		$\mathbf{x},\mathbf{y}^\prime$& $k=t$ & $k=t\oplus_n\frac{n\pm 1}{2}$\\ 
		\hline
		$\mathbf{x}^\prime,\mathbf{y}$& $l=s$ & $l=s\oplus_n\frac{n\pm 1}{2}$ \\ 
		\hline
	\end{tabular}
	\caption{Conditions for which either of the entangled effects clicks sharply.}\label{table2}
\end{table}
\end{center}
It can be easily shown that $e^{AB}_{00}(++)$ and $\bar{e}^{AB}_{00}(++)$ will get clicked in $1:1$ ratio for any combination of these conditions. However, for a nontrivial computation, there should be at least a group $\mathcal{G}=\{\mathbf{x},\mathbf{x}^\prime,\mathbf{y},\mathbf{y}^\prime\}$ with $\mathbf{x}\neq\mathbf{x}^\prime~\&~\mathbf{y}\neq\mathbf{y}^\prime$ such that the string pairs are  mapped into $1:3$ ratio (Observation 1). Hence, no nontrivial computation can be performed perfectly in any Type-II {\it odd-gon} theory.\\

\underline{\it Even-gon theory}\\\\ 
{\bf Type-I Models:}
The effects, $e^A_i\otimes e^B_j$ and $\bar{e}^A_i\otimes\bar{e}^B_j$, can be clubbed together to get a single effect, $E_{i\otimes j}:=e^A_i\otimes e^B_j+\bar{e}^A_i\otimes\bar{e}^B_j$ since, $p(e^A_i\otimes e^B_j|\omega_{kl}^{AB}(p,q))=p(\bar{e}^A_i\otimes \bar{e}^B_j|\omega_{kl}^{AB}(p,q),~\forall~i,j,k,l,p,q$. Similarly, we have, $\bar{E}_{i\otimes j}:=e^A_i\otimes\bar{e}^B_j+\bar{e}^A_i\otimes e^B_j$. Clubbing the effects in a different manner will do nothing but increase the ambiguity. Consider that Alice and Bob start the protocol with $\omega_{00}^{AB}(++)$ and Charlie performs the decoding measurement, $\mathcal{M}^{A\otimes B}\equiv\{E_{0\otimes 0},\bar{E}_{0\otimes 0}\}$ without loss of generality. Encoding of different bit-strings can be accomplished by applying the proper reversible transformations on $\omega_{00}^{AB}(++)$. The probabilities to obtain the effect $E_{0\otimes0}$ on  different entangled states are given by,
\begin{itemize}
	\item [i)]$p\left(E_{0\otimes0}|\omega_{kl}^{AB}(++)\right)=\frac{1}{2}\left[1+r^2_n\cos\left(\frac{\pi}{n}-\frac{2\pi}{n}(k-l)\right)\right]$,
	\item [ii)]$p\left(E_{0\otimes0}|\omega_{kl}^{AB}(--)\right)=\frac{1}{2}\left[[1+r^2_n\cos\left(\frac{\pi}{n}+\frac{2\pi}{n}(k-l)\right)\right]$,
	\item [iii)]$p\left(E_{0\otimes0}|\omega_{kl}^{AB}(+-)\right)=\frac{1}{2}\left[[1+r^2_n\cos\left(\frac{\pi}{n}+\frac{2\pi}{n}(k+l)\right)\right]$,
	\item [iv)]$p\left(E_{0\otimes0}|\omega_{kl}^{AB}(-+)\right)=\frac{1}{2}\left[[1+r^2_n\cos\left(\frac{3\pi}{n}+\frac{2\pi}{n}(k+l)\right)\right]$,
\end{itemize}
where, $\omega_{kl}^{AB}(p,q)=\left(\mathcal{T}_k^p\otimes\mathcal{T}_l^q\right)\omega_{00}^{AB}(++)$. To avoid the ambiguity, we have to choose the proper reversible transformations maintaining the restrictions listed in Table \ref{table3}.
\begin{table}[h!]
	\resizebox{16cm}{!} {
		\begin{tabular}{ c c c c c|l }
			\cline{2-5}
			& \multicolumn{4}{|c|}{\multirow{2}{*}{Restrictions on $k$ and $l$ when the states are}} \\
			& \multicolumn{1}{|c}{}& & & &\\ 
			\cline{2-5}
			& \multicolumn{1}{|c}{\multirow{2}{*}{$\omega_{kl}^{AB}(++)$}} & \multicolumn{1}{|c}{\multirow{2}{*}{$\omega_{kl}^{AB}(--)$}} & \multicolumn{1}{|c}{\multirow{2}{*}{$\omega_{kl}^{AB}(+-)$}} & \multicolumn{1}{|c|}{\multirow{2}{*}{$\omega_{kl}^{AB}(-+)$}} \\ 
			& \multicolumn{1}{|c}{} & \multicolumn{1}{|c}{} & \multicolumn{1}{|c}{} & \multicolumn{1}{|c|}{} &\\ 
			\cline{1-5}
			\multicolumn{1}{|c}{\multirow{2}{*}{$p[E_{0\otimes0}|\omega_{kl}^{s_As_B}]=1$}} & \multicolumn{1}{|c}{\multirow{2}{*}{$k=l$, $k=l\oplus_n 1 $}} & \multicolumn{1}{|c}{\multirow{2}{*}{$k=l$, $k=l\oplus_n(n-1)$}} & \multicolumn{1}{|c}{\multirow{2}{*}{$k=-l\oplus_n n$, $k=-l\oplus_n(n-1)$}} & \multicolumn{1}{|c|}{\multirow{2}{*}{$k=-l\oplus_n(n-1)$, $k=-l\oplus_n 2(n-1)$}}\\
			\multicolumn{1}{|c}{} & \multicolumn{1}{|c}{} & \multicolumn{1}{|c}{} & \multicolumn{1}{|c}{} & \multicolumn{1}{|c|}{} &\\
			\cline{1-5}
			\multicolumn{1}{|c}{\multirow{2}{*}{$p[\bar{E}_{0\otimes0}|\omega_{kl}^{s_As_B}]=1$}} & \multicolumn{1}{|c}{\multirow{2}{*}{$k=l\oplus_n \frac{n}{2},k=l\oplus_n(\frac{n}{2}+1)$}} & \multicolumn{1}{|c}{\multirow{2}{*}{$k=l\oplus_n \frac{n}{2},k=l\oplus_n(\frac{n}{2}-1)$}} & \multicolumn{1}{|c}{\multirow{2}{*}{$k=-l\oplus_n \frac{n}{2},k=-l\oplus_n(\frac{n}{2}-1)$}} & \multicolumn{1}{|c|}{\multirow{2}{*}{$k=-l\oplus_n(\frac{n}{2}-1)$, $k=-l\oplus_n(\frac{n}{2}-2)$}}\\
			\multicolumn{1}{|c}{} & \multicolumn{1}{|c}{} & \multicolumn{1}{|c}{} & \multicolumn{1}{|c}{} & \multicolumn{1}{|c|}{} &\\
			\cline{1-5}
	\end{tabular}}
	\caption{The allowed integer values of $k$ and $l$ have been depicted for which either of the entangled effects clicks sharply.}\label{table3}
\end{table}
Suppose, Alice applies $\mathcal{T}_k^p$ \& $\mathcal{T}_{k^\prime}^{p^\prime}$ when she receives the strings $\mathbf{x}$ \& $\mathbf{x}^\prime~(\neq\mathbf{x})$ respectively, and Bob applies $\mathcal{T}_l^q$ \& $\mathcal{T}_{l^\prime}^{q^\prime}$ for the strings $\mathbf{y}$ \& $\mathbf{y}^\prime~(\neq\mathbf{y})$ similarly on their shared state $\omega_{00}^{AB}(++)$. Compared to the {\it odd-gon} theories, there are more possibilities for encoding in {\it even-gon} cases as the number of entangled states are more. We consider a particular case with $p=+$, $p^\prime=-$, $q=+$, and $q^\prime=-$. With the help of the Table \ref{table3}, we arrive at the following conditions (see Table \ref{table4}) for unambiguous decoding.
\begin{table}[h!]
	\begin{tabular}{|c||c|c| } 
		\hline
		~~Input strings~~ & ~~~~~Encoded State~~~~~ & ~~~~~~~~~~Conditions~~~~~~~~~\\
		\hline\hline
		\multirow{3}{4em}{~~~$\mathbf{x},\mathbf{y}$}& \multirow{3}{4em}{$\omega_{kl}^{AB}(++)$} & $k=l~\&~k=l\oplus_n1$\\ 
		&  & or\\ 
		&  & $k=l\oplus_n\frac{n}{2}~\&~k=l\oplus_n(\frac{n}{2}+1)$\\
		\hline
		\multirow{3}{4em}{~~~$\mathbf{x}^\prime,\mathbf{y}^\prime$}& \multirow{3}{4em}{$\omega_{k^\prime l^\prime}^{AB}(--)$} & $k^\prime=l^\prime~\&~k^\prime=l^\prime\oplus_n(n-1)$\\ 
		&  & or\\
		&  & $k^\prime=l^\prime\oplus_n\frac{n}{2}~\&~k^\prime=l^\prime\oplus_n(\frac{n}{2}-1)$\\ 
		\hline
		\multirow{3}{4em}{~~~$\mathbf{x},\mathbf{y}^\prime$}& \multirow{3}{4em}{$\omega_{k l^\prime}^{AB}(+-)$} & $k=-l^\prime\oplus_n n~\&~k=-l^\prime\oplus_n(n-1)$\\ 
		&  & or\\
		&  & $k=-l^\prime\oplus_n\frac{n}{2}~\&~k=-l^\prime\oplus_n(\frac{n}{2}-1)$\\ 
		\hline
		\multirow{3}{4em}{~~~$\mathbf{x}^\prime,\mathbf{y}$}& \multirow{3}{4em}{$\omega_{k^\prime l}^{AB}(-+)$} & $k^\prime=-l\oplus_n(n-1)~\&~k^\prime=-l\oplus_n2(n-1)$\\ 
		&  & or\\
		&  & ~~~$k^\prime=-l\oplus_n(\frac{n}{2}-1)~\&~k^\prime=-l\oplus_n(\frac{n}{2}-2)$~~~\\ 
		\hline
		\hline
	\end{tabular}
	\caption{Conditions for unambiguous decoding.}\label{table4}
\end{table}
These conditions lead to the fact that the effects $E_{0\otimes0}$ and $\bar{E}_{0\otimes0}$ will get clicked either in $1:1$ or in $1:0$ ratio resulting a trivial computation. In a similar fashion, one can argue the same for any choice of $p,p^\prime,q,q^\prime\in\{+,-\}$. Hence, no nontrivial computations can be executed by this type of theories.\\\\
{\bf Type-II Models:} In this case, the arguments go as in the case of {\bf Type-II} {\it odd-gon} models and it turns out that the decoding effects get clicked in $1:1$ ratio. Hence, all the computations, which can be accomplished, are trivial by this kind of theories. This completes proof of the Theorem \ref{theorem2}. 

\end{document}